\newtheorem{theorem}{Theorem}
\newtheorem{lemma}{Lemma}
\begin{document}
%
\title{Millimeter-Wave NR-U and WiGig Coexistence: Joint User Grouping, Beam Coordination and \\ Power Control}

\author{Xiaoxia Xu, Qimei Chen, ~\IEEEmembership{Member,~IEEE}, Hao Jiang, ~\IEEEmembership{Member,~IEEE}, and Jun Huang
             \thanks{Manuscript received Sep 16, 2020; revised Apr 28, 2021; accepted Sep 05, 2021. This work was supported in part by the National Natural Science Foundation Program of China under Grant 61801333, in part by the Young Elite Scientists Sponsorship Program by CAST 2019, in part by the Natural Science Foundation General Program of Hubei Province under Grant 2020CFB633, in part by the National Natural Science Foundation of China Enterprise Innovation Development Key Project under Grant U19B2004.  (\emph{Corresponding author: Qimei Chen, Hao Jiang})}
             \thanks{X. Xu, Q. Chen, and H. Jiang are with the School of Electronic Information, Wuhan University, Wuhan 430072, China (e-mail: \{xiaoxiaxu, chenqimei, jh\}@whu.edu.cn).}
             \thanks{J. Huang is with the School of Cybersecurity, Northwestern Polytechnical University, Taichang/Xi'an, China. e-mail: huangj@ieee.org.}}

\markboth{IEEE TRANSACTIONS ON WIRELESS COMMUNICATIONS,~Vol.~XX, No.~XX, XXX~2021}%
{Shell \MakeLowercase{\textit{et al.}}: Bare Demo of IEEEtran.cls for IEEE Journals}

\maketitle

\begin{abstract}
Millimeter wave (mmWave) communication is a promising New Radio in Unlicensed (NR-U) technology to meet with the ever-increasing data rate and connectivity requirements in future wireless networks.
However, the development of NR-U networks should consider the coexistence with the incumbent Wireless Gigabit (WiGig) networks.
In this paper, we introduce a novel multiple-input multiple-output non-orthogonal multiple access (MIMO-NOMA) based mmWave NR-U and WiGig coexistence network for uplink transmission.
Our aim for the proposed coexistence network is to maximize the spectral efficiency while ensuring the strict NR-U delay requirement and the WiGig transmission performance in real time environments.
A joint user grouping, hybrid beam coordination and power control strategy is proposed, which is formulated as a Lyapunov optimization based mixed-integer nonlinear programming (MINLP) with unit-modulus and nonconvex coupling constraints.
Hence, we introduce a penalty dual decomposition (PDD) framework, which first transfers the formulated MINLP into a tractable augmented Lagrangian (AL) problem. Thereafter, we integrate both convex-concave procedure (CCCP) and inexact block coordinate update (BCU) methods to approximately decompose the AL problem into multiple nested convex subproblems, which can be iteratively solved under the PDD framework.
Numerical results illustrate the performance improvement ability of the proposed strategy, as well as demonstrating
the effectiveness to guarantee the NR-U traffic delay and WiGig network performance.
\end{abstract}

\begin{IEEEkeywords}
NR-U and WiGig coexistence, mmWave MIMO-NOMA, user grouping, beam coordination.
\end{IEEEkeywords}

\IEEEpeerreviewmaketitle

\section{Introduction}
\IEEEPARstart{F}{acing} with the surge of mobile devices and the spectrum crunch of microwave bands, the upcoming fifth generation (5G) wireless networks are expected to consider millimeter wave (mmWave) bands to meet the demands of massive connectivity, high transmission data rate, and low traffic delay \cite{Lopez2014Opportunities}.
However, transmissions at mmWave bands would suffer from significantly high propagation loss and susceptibility to blockage \cite{Rappaport2012mmWave}.
In addition, it is extremely challenging to meet the low hardware cost and energy consumption constraints of mobile users due to the mmWave radio frequency (RF) complexity \cite{Roh2014mmWaveBF, Molisch2017HybridBF}.
Therefore, mmWave communications are historically infeasible.
Recently, 5G New Radio-Unlicensed (NR-U) network has been confirmed to appealingly include unlicensed $60$ GHz mmWave band, which tackles these challenges by mobilizing mmWave into the advanced NR technologies \cite{3GPP2019NRU}. Specifically, the NR-U network equipped with large antenna arrays can compensate the propagation characteristics by providing highly directional transmissions \cite{Roh2014mmWaveBF}. Moreover, hybrid beamforming can also be adopted to reduce the NR-U RF complexity \cite{Molisch2017HybridBF}.

However, the NR-U network is still incapable to support massive connectivity due to the limitation of RF chains.
Currently, non-orthogonal multiple access (NOMA) is considered as a promising multiple access technology for future wireless communications \cite{Liu2017NOMA,Chandra2018mmWaveNOMA}, which aims to simultaneously serve multiple users over the same frequency/time/code at the cost of inter-user interference.
Hence, it is natural to apply the NOMA technique into the NR-U network.
There are several existing works investigating the multiple-input multiple-output (MIMO)-NOMA technique.
The authors in \cite{Ding2017MIMONOMA} introduces two patterns of MIMO-NOMA transmission protocols, where NOMA technique is combined with multiple antennas utilized either for beamforming or spatial multiplexing.
With the aid of random beamforming, the authors in \cite{Cui2018MIMONOMA} investigate a joint user scheduling and power control design.
In \cite{Zhu2019mmWaveNOMA}, the authors divide the users with high-correlated spatial channels into different directional beams, and formulate a hybrid beamforming and power control problem to maximize data rate.
The authors in \cite{Xiao2018mmWaveNOMA} propose a joint beamforming and power control strategy for downlink mmWave NOMA transmissions under a constant modulus constraint due to the analog beamforming structure. Correspondingly, the authors in \cite{Zhu2018UplinkmmWaveNOMA} propose a joint beamforming and power control mechanism for uplink mmWave NOMA transmissions, where a close-to-bound uplink data rate has been achieved.
Furthermore, the authors in \cite{Dai2019HybridPrecodingMIMONOMA} investigate the integration of simultaneous wireless information and power transfer (SWIPT) in mmWave massive MIMO-NOMA systems.
An angle-domain MIMO-NOMA scheme for multi-cell mmWave networks is proposed in \cite{Shao2020MIMONOMA_MultiCell}, where each beam serves one cell-edge user and one cell-center user, and the precoders and decoders are jointly designed to mitigate interference.
However, all of the aforementioned studies have not considered the coexistence issue with other radio access technologies (RATs) over $60$ GHz mmWave band, mainly for the incumbent wireless gigabit (WiGig) networks.

IEEE $802.11$ay \cite{Ghasempour2017802.11ay,802.11ayDraft2017} is an enhanced WiGig protocol building upon IEEE $802.11$ad, which enables multi-user MIMO  (MU-MIMO) beamforming to realize multi-Gigabit-per-second ultra-high-speed communications.
Since both NR-U and WiGig networks adopt narrow directional beams, the conventional listen-before-talk (LBT) mechanisms designed for the network coexistence under Sub-6GHz bands cannot be applied directly \cite{Lagen2019NRBeamBasedCoexistence}.
The authors in \cite{mmWaveLBT2017} propose an omnidirectional LBT (omniLBT) mechanism and a directional LBT (dirLBT) mechanism for the mmWave coexistence network.
However, the omniLBT mechanism may prevent transmission even if the detected signal insignificantly damages the transmissions in the intended direction, which results in overprotection and spatial inefficiency. Meanwhile, the dirLBT mechanism suffers the hidden nodes problem.
A tradeoff between omniLBT and dirLBT mechanisms is investigated in \cite{Lagen2018pairLBT}.
However, both omniLBT and dirLBT mechanisms are performed based on the interference estimation at the transmitters, which is usually impractical.
As a result, the authors in \cite{LAT2019,Lagen2018LBR} respectively investigate a more practical and efficient LBT mechanism that estimates interference at receivers.
Specifically, the authors in \cite{LAT2019} propose a listen-after-talk (LAT) mechanism that directly involves carrier sensing at the receiver side.
However, LAT is not compliant with LBT regulation.
To address these issue, a listen-before-receive (LBR) mechanism is proposed \cite{Lagen2018LBR}, where transmitters trigger the receivers to feedback assisted carrier sensing information to complement LBT.
Nonetheless, due to the lack of centralized control, the spectrum utilization is insufficient under the distributed and uncoordinated LBT based mechanisms.
Particularly, these mechanisms are inapplicable in MIMO-NOMA based networks, since the intra-beam interference cancellation is power dependent.


In this paper, we introduce a novel MIMO-NOMA based uplink transmission coexistence framework for multi-cell NR-U and WiGig networks.
Different from the existing works, the proposed coexistence framework enables NR-U and WiGig devices to simultaneously transmit on the same mmWave band through the coordination of highly-directional beams.
By leveraging the MIMO-NOMA technique and observing WiGig signals, the NR-U network is capable to support higher spectral efficiency and lower traffic latency for intensively connected users, as well as preventing performance loss to incumbent WiGig transmissions.
Therefore, we propose a joint user grouping, hybrid beam coordination and power control strategy to maximize the spectral efficiency while ensuring the NR-U delay requirement and WiGig network performance in real time environments. The objective function is formulated as a Lyapunov optimization based mixed-integer nonlinear programming (MINLP) with unit-modulus and nonconvex coupling constraints.
To deal with the NP-hard problem, we first equivalently transfer the nonconvex MINLP into a tractable augmented Lagrangian (AL) problem under a penalty dual decomposition (PDD) framework.
Thereafter, we approximately decompose the AL problem into several nested convex subproblems through the concave-convex procedure (CCCP) and inexact block coordinate update (BCU) methods, which can be iteratively solved under the PDD framework.


The main contributions of this paper are listed as follows.
\begin{enumerate}
  \item We introduce a novel mmWave NR-U and WiGig coexistence framework for uplink transmission.
  By leveraging MIMO-NOMA technique, the proposed framework can enhance user connections among multiple RATs over the same mmWave frequency band.
  Through the coordination of the highly-directional beams, both intra-RAT and inter-RAT interferences can be suppressed to improve spectrum utilization as well as ensuring harmonious coexistence.

  \item A joint user grouping, hybrid beam coordination and power control strategy has been proposed to maximize the NR-U spectral efficiency while ensuring each NR-U users' data rate and delay requirements, as well as guaranteeing the performance of WiGig networks. The objective function is formulated as a Lyapunov optimization problem, which can asymptotically achieve the optimal solution and dynamically control the tradeoff between NR-U traffic delay and WiGig interference mitigation.

  \item We propose a \emph{PDD-CCCP} method to solve the NP-hard Lyapunov optimization problem.
      The nonconvex MINLP Lyapunov optimization problem is tractably transferred as an AL problem.
       By conjunctively utilizing CCCP and inexact BCU methods, the AL problem is approximately decomposed into several nested convex subproblems, which can be iteratively solved based on the PDD framework.
\end{enumerate}

The rest of this paper is organized as follows. Section II introduces the NR-U and WiGig coexistence framework. Section III describes the system model and Section IV is the problem formulation. In Section V, a joint user grouping, beam coordination and power control strategy is studied. Numerical results are presented in Section VI. Finally, Section VII concludes the whole paper.
For the ease of reference, we summarize the definitions of the acronyms that will be frequently utilized in this work in Table \ref{Tab:01}.

\emph{Notation:} We denote the vectors and matrices by lower and upper boldface symbols, respectively.
$\frac{\partial F}{ \partial x}$ denotes the first partial derivative of function $F$ with respect to $x$. $\mathbb{E}\left[\cdot\right]$ represents the statistical expectation. $(\cdot)^H$ is the Hermitian transpose. $|\cdot|$ and $\|\cdot\|$ denote the absolute value and the Euclidean norm, respectively.

\begin{table}
\centering
\caption{Summary of important acronyms}
\label{Tab:01}
\small
\begin{tabular}{|c|c|}
  \hline
  \textbf{Acronym} & \textbf{Definition} \\\hline
  BI & Beacon interval \\\hline
  BFT & Beamforming training \\\hline
  SLS & Sector level sweep \\\hline
  BTI & Beacon transmission interval \\\hline
  A-BFT &  Association beamforming training\\\hline
  BRP & Beam refinement period \\\hline
  BHI & Beacon header interval \\\hline
  DTI & Data transmission interval \\\hline
  ATI & Announcement transmission interval \\\hline
  DMG & Directional multi-gigabit \\\hline
  SSW & Sector sweep \\\hline
  SSW-FB & Sector sweep feedback \\\hline
  SID & Sector identification\\\hline
  SP & Service period \\\hline
\end{tabular}
\end{table}

\section{MmWave NR-U and WiGig Coexistence Framework}
\begin{figure}[h]
    \centering
    \includegraphics[width=0.5\textwidth]{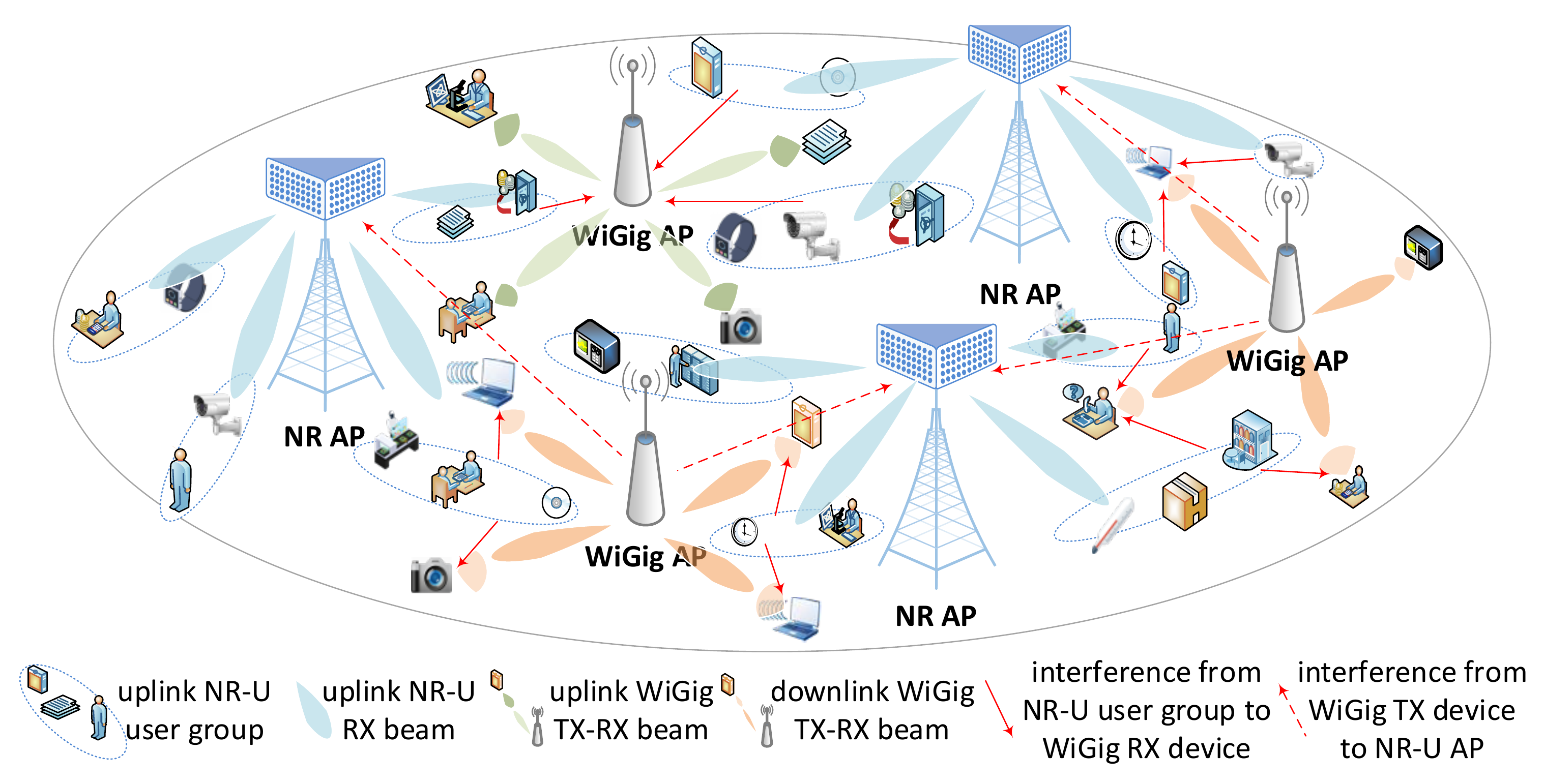}
    \caption{Illustration of the NR-U and WiGig coexistence framework.}
    \label{sysModel}
\end{figure}
As shown in Fig. \ref{sysModel}, the WiGig networks complying with the IEEE 802.11ay standard \cite{802.11ayDraft2017} have an overlapped coverage with the mmWave NR-U networks deployed indoors. In this work, we mainly consider the uplink NR-U transmissions.
Meanwhile, the WiGig networks support both uplink and downlink MU-MIMO transmissions.
We assume both the WiGig APs and NR-U APs apply the hybrid beamforming that allow multiple transmission links simultaneously, and each WiGig user utilizes analog beamforming with single RF chain to reduce hardware cost and complexity\cite{Aldubaikhy2019HBF-PDVG}.
We further assume that each NR-U AP integrates both the NR-U and WiGig interfaces.
Therefore, each NR-U AP can estimate the channel state information (CSI) from both the NR-U users and WiGig devices.
In addition, we assume the NR-U AP has the pre-defined WiGig analog beamforming codebook knowledge in advance.


\begin{figure*}[hbt]
    \centering
    \includegraphics[width=0.8\textwidth]{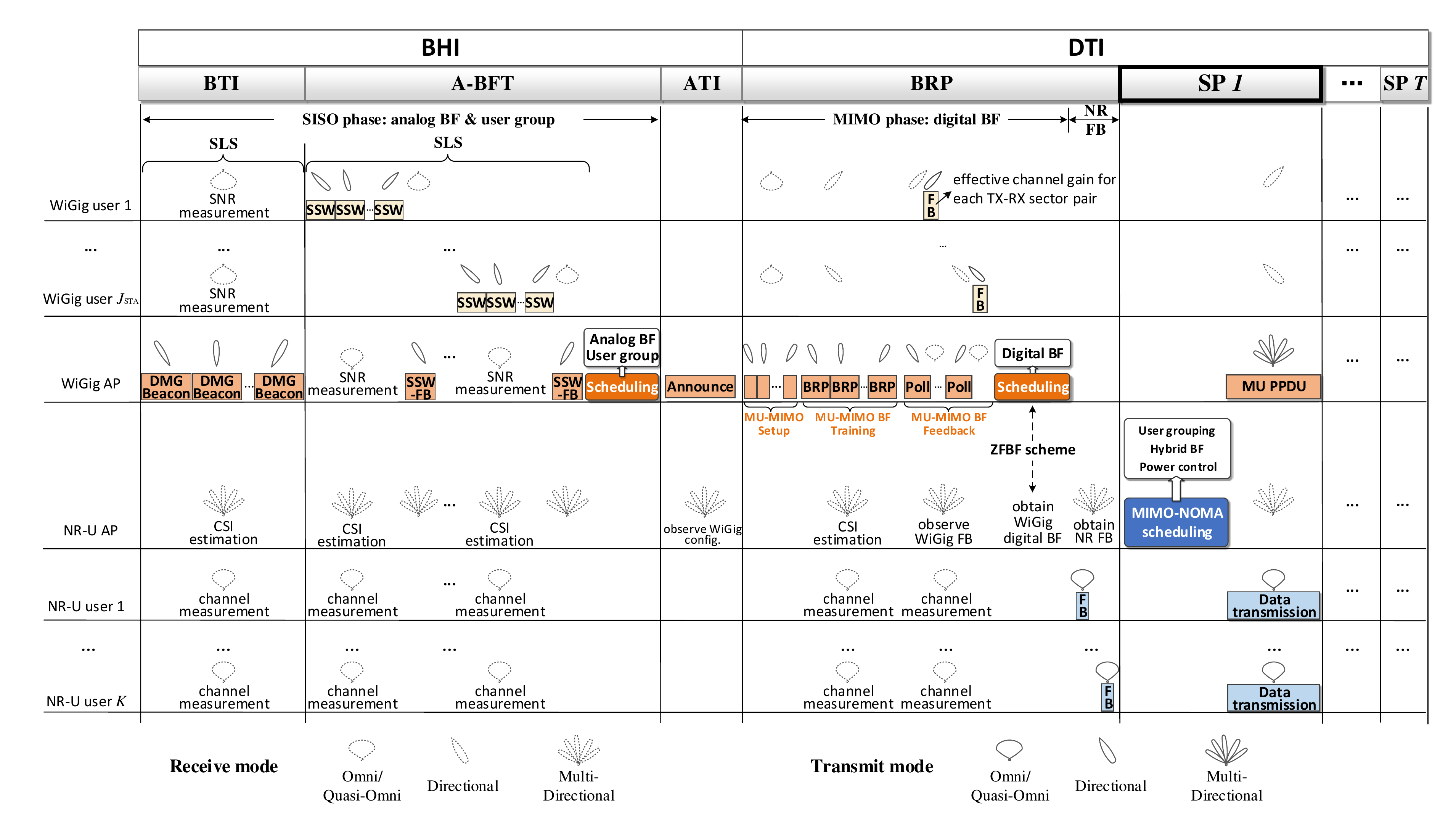}
    \caption{Flow chart of the NR-U and WiGig coexistence procedure.}
    \label{SharedAccess}
\end{figure*}

The NR-U and WiGig coexistence procedure is presented in Fig. \ref{SharedAccess}.
According to the IEEE $802.11$ay standard, the WiGig transmission period consists of multiple beacon intervals (BIs).
Each BI is divided into a beacon header interval (BHI) and a data transmission interval (DTI).
At the beginning of each BI, WiGig APs would accomplish beamforming training (BFT) \cite{Aldubaikhy2019HBF-PDVG} and WiGig user grouping.
By observing the MIMO user grouping and hybrid beamforming configuration settings of WiGig APs, NR-U AP is capable to control the inter-RAT interferences to both networks.
In what follows, we would describe the detailed procedures in the BHI and DTI.
Without loss of generality, we take downlink WiGig MU-MIMO beamforming as an example to illustrate the BFT procedure.


\subsection{Beacon Header Interval (BHI)}
Each BHI consists of a beacon transmission interval (BTI), an association beamforming training (A-BFT), and an announcement transmission interval (ATI).
WiGig APs would implement the analog BF training and the WiGig user grouping in the BHI.

During the periods of BTI and A-BFT, WiGig analog BF is first trained in the single-input single-output (SISO) phase by sequentially performing sector level sweep (SLS) at WiGig APs and users.
Specifically, in the BTI, each WiGig AP performs 
SLS by sending directional multi-gigabit (DMG) beacon frames through different sectors. Meanwhile, each WiGig user in the quasi-omnidirectional mode estimates the signal-to-noise ratio (SNR) to the WiGig AP. During the A-BFT, each WiGig user implements SLS 
by transmitting sector sweep (SSW) frames to train their sectors with the related WiGig APs, while the WiGig APs operating at the quasi-omnidirectional mode. The SSW frame contains the best sector identification (SID) of the WiGig AP with the highest SNR from the BTI phase. Thereafter, the WiGig AP utilizes the best SID to send sector sweep feedback (SSW-FB) frame, informing WiGig user about its best SID obtained from the A-BFT.
By sequentially performing SLS in the BTI and A-BFT, the best transmitting-receiving (TX-RX) analog BF beam pair for each unassociated user is obtained.

Thereafter, based on the best SID reported by WiGig users, WiGig AP can divide its associated WiGig users into different semi-orthogonal MU-MIMO user groups without requiring perfect CSI.
The detailed analog BF training and user grouping algorithm can be found in \cite{Aldubaikhy2019HBF-PDVG}.
The inter-user interferences within WiGig user group can be mitigated through the semi-orthogonal user selection, and each semi-orthogonal MU-MIMO user group would be scheduled to transmit/receive simultaneously based on MU-MIMO.

In the ATI, the WiGig AP proclaims the network management information for DTI by sending announce frame.
By observing the Extended Schedule element (ESE) entries specified in the announce frames, NR-U APs can be aware of WiGig MU-MIMO user groups, TX-RX analog beam sector configuration, and channel access scheduling during DTI.

\subsection{Data Transmission Interval (DTI)}
The DTI consists of a beam refinement period (BRP) and a series of scheduled service periods (SPs) to transmit data.

During BRP, WiGig AP obtains effective channel feedback from WiGig users, which can be used to achieve digital BF vectors that mitigate residual interference within each MU-MIMO user group.
The digital BF training is implemented through a MIMO phase in the BRP.
Similar with the IEEE $802.11$ay protocol, digital BF training is performed by MU-MIMO BF setup, MU-MIMO BF training, and MU-MIMO BF feedback. Specifically, each WiGig AP first transmits multiple MIMO BF setup frames with different sectors to declare the selected users of each MU-MIMO user group, the candidate TX sectors and their training order. Then, each WiGig AP sends BRP frames to users through the candidate TX sectors, which allows users estimating the effective channel gain utilizing the RX analog beam trained by the former SISO phase. Finally, users feedback the estimated effective channel gain under different TX sectors with its best RX sectors.
Based on the effective channel gain, each WiGig AP calculates digital BF vectors through specific strategy, e.g., SINR maximization method or zero-forcing (ZF)\cite{Aldubaikhy2019HBF-PDVG,802.11ayMUMIMOFeedback}.
In our proposed mechanism, we assume the ZF digital BF strategy is applied, and the digital BF strategy information is achieved by the NR-U AP in advance.
Therefore, the NR-U AP can also obtain the digital BF vectors of each WiGig AP by observing the effective channel feedback from WiGig users.

When WiGig devices (WiGig APs and users) transmit in the BRP in different sectors, the NR-U APs and NR-U users monitor the channel and measure the channel responses from different WiGig devices.
In detail, from the channel observation, the NR-U APs can estimate CSI by utilizing the compressed sensing based method, which recovers interference path gain and angle-of-arrival (AoA) exploiting the sparse scattering nature of mmWave channel \cite{Li2018mmWaveCSIEstimation}.
After the WiGig MU-MIMO BF training procedure, the observed channel information at NR-U users is finally fed back to NR-U AP to estimate the effective interference channel from WiGig devices to NR-U users under different WiGig antenna sectors.
Based on the channel reciprocity, effective interference channels from NR-U AP/users to WiGig AP/users can also be attained.

\section{System Model}
We consider $I$ NR-U APs co-sited with $J_{\mathrm{AP}}$ WiGig APs over the same mmWave band.
In the coverage of the NR-U networks, there are $K$ densely deployed  low-cost NR-U users, and each of them equips a single antenna.
Denote the set of NR-U APs and NR-U users as $\mathcal{I}$ and $\mathcal{K}$, respectively.
To increase antenna gain as well as reducing hardware complexity, each NR-U AP adopts the commonly used fully-connected hybrid architecture with $M_{0}$ uniform linear antenna (ULA) antennas connected to $N_{0}$ RF chains via  $M_{0}\times N_{0}$  phase shifters, $M_{0} > N_{0}$.
The analog and digital beamformers at the NR-U AP $i$ are denoted as $\mathbf{F}_i \in \mathbb{C}^{M_{0} \times N_{0}}$ and $\mathbf{D}_i \in \mathbb{C}^{N_{0} \times N_{0}}$, respectively.
Hence, the aggregate digital beamformer matrix can be denoted as $\mathbf{D} = [\mathbf{D}_1, \mathbf{D}_2, ..., \mathbf{D}_{I}]$.
By adopting NOMA technique, the highly angular-correlated NR-U users can be clustered into one user group, which are served by the same beam.
Here, we denote $\mathcal{U} = \mathcal{N}_1^{\mathrm{C}} + \mathcal{N}_2^{\mathrm{C}} + ... + \mathcal{N}_{I}^{\mathrm{C}}$ as the aggregate beam set for all the NR-U APs, where $\mathcal{N}_{i}^{\mathrm{C}}$ denotes the beam set for NR-U AP $i$.
Moreover, we denote $x_{u,k}(t) = 1$ if NR-U user $k \in \mathcal{K}$ is grouped into beam $u$, $u \in \mathcal{U}$, at SP $t$, and $x_{u,k}(t) = 0$ otherwise.
On the other hand, we denote there is a set  $\mathcal{J}=\mathcal{J}_{\mathrm{A}}\cup\mathcal{J}_{\mathrm{U}}$ of WiGig devices, which includes $\mathcal{J}_{\mathrm{A}}$ WiGig APs and $\mathcal{J}_{\mathrm{U}}$ WiGig users.
Each WiGig AP supports short transmission range, directional and extremely high-speed services.
Each WiGig AP $j\in\mathcal{J}_{\mathrm{A}}$ under hybrid beamforming mode equips $M_{j}^{\mathrm{A}}$ steerable antennas and $N_{j}$ RF chains, which can support up to $N_{j}$ spatial streams.
Moreover, each WiGig user $j\in\mathcal{J}_{\mathrm{U}}$ implements analog beamforming mode with $M_{j}^{\mathrm{U}}$ steerable antennas and a single RF chain.
The coverage of each WiGig AP/user is divided into multiple TX/RX sectors, indexed by different SIDs. Each sector corresponds to a specific analog beam, i.e., analog antenna weight vector (AWV), which comes from the pre-defined analog BF codebook.
Denote the digital beamformer for WiGig AP $j$ as $\mathbf{V}_{j}^{\mathrm{BB}} \in \mathbb{C}^{N_{j} \times N_{j}}$,
the analog beamformer for WiGig AP (user) $j$ as $\mathbf{V}_{j}^{\mathrm{RF,A}} \in \mathbb{C}^{M_{j}^{\mathrm{A}} \times N_{j}^{\mathrm{A}}}$ ($\mathbf{v}_{j}^{\mathrm{RF,U}} \in \mathbb{C}^{M_{j}^{\mathrm{U}} \times 1}$),
and the adopted hybrid beamformer of WiGig AP (user) as $\mathbf{V}_{j}^{\mathrm{A}} \in \mathbb{C}^{M_{j}^{\mathrm{A}} \times N_{j}^{\mathrm{A}}}$ ($\mathbf{V}_{j}^{\mathrm{U}} \in \mathbb{C}^{M_{j}^{\mathrm{U}} \times 1}$).
Hence, we have $\mathbf{V}_{j}^{\mathrm{A}} = \mathbf{V}_{j}^{\mathrm{RF,A}}\mathbf{V}_{j}^{\mathrm{BB}}$ for each WiGig AP $j\in\mathcal{J}^{\mathrm{A}}$
and $\mathbf{v}_{j}^{\mathrm{U}} = \mathbf{v}_{j}^{\mathrm{RF,U}}$ for each WiGig user $j\in\mathcal{J}^{\mathrm{U}}$.

Assume each DTI has $T$ SPs, indexed by $\mathcal{T} = \{1,2,...,T\}$, and the duration of each SP $t$ is denoted as $\tau(t)$.
Denote $\mathcal{J}_{\mathrm{T}}(t)$ and $\mathcal{J}_{\mathrm{R}}(t)$ as the set of WiGig TX and RX devices scheduled in SP $t$.
For the sake of expression, we take downlink WiGig MU-MIMO transmissions as example to illustrate the coexistence network, where the WiGig TX and RX devices are chosen from WiGig APs and users, respectively.
It's worth mentioning that the model formulation can be directly extended to incorporate both uplink and downlink WiGig MU-MIMO transmissions.


\subsection{Channel Model}
Define $\mathbf{h}_{i,k} \in \mathbb{C}^{M_{0} \times 1}$ as the spatial domain channel vector from NR-U user $k$ to the NR-U AP $i$,
and $\mathbf{G}_{i,j}^{\mathrm{C}} \in \mathbb{C}^{M_{0}\times M_{j}^{\mathrm{A}}}$ as the interference channel from WiGig AP $j$ to the NR-U AP $i$. Moreover, we denote $\mathbf{g}_{j,k}^{\mathrm{W}} \in \mathbb{C}^{M_{j}^{\mathrm{U}} \times 1}$ as the interference channel from NR-U user $k$ to WiGig user $j$.

According to the widely used Saleh-Valenzuela channel model \cite{Dai2019HybridPrecodingMIMONOMA}, we have
\begin{equation}\label{CellularChannel}
\small{
\mathbf{h}_{i,k} = \sum_{l=0}^{L_{i,k}} \beta_{i,k,(l)} \mathbf{a}_{0}\left( \psi_{i,k,(l)}\right), \forall k \in \mathcal{K},}
\end{equation}
where $L_{i,k}$ denotes the total number of multipath components (MPCs) from NR-U user $k$ to the NR-U AP $i$, and $\beta_{i,k,(l)}$ denotes the complex coefficient for the $l$-th MPC.
$\psi_{i,k,(l)} = \frac{d}{\lambda} \sin \varphi_{i,k,(l)}$ denotes the spatial direction,
and $\varphi_{i,k,(l)}$ is the corresponding AoA at the NR-U AP.
$\mathbf{a}_{i}(\psi)$, $i \in \{0\} \cup \mathcal{J}$, is $M_{i} \times 1$ array steering of ULA evaluated at the corresponding angle of arrival/departure (AoA/AoD),
given by
\begin{equation}\label{arrayResponse}
\small{
\mathbf{a}_{i}(\psi) = \frac{1}{M_{i}}\left[1, e^{j\pi \sin(\psi)}, ..., e^{j (N-1)\pi \sin(\psi)}\right]^T,}
\end{equation}
\noindent where $i=0$ refers to NR-U AP and $i\in\mathcal{J}$ refers to the $i$-th WiGig devices.

Similarly, the interference channel matrix $\mathbf{G}_{i,j}^{\mathrm{C}}$ from WiGig AP $j\in\mathcal{J}_{\mathrm{A}}$ to the NR-U AP $i\in \mathcal{I}$, and the interference channel vector $\mathbf{g}_{j,k}^{\mathrm{W}}$ from NR-U user $k\in\mathcal{K}$ to the WiGig device $j\in\mathcal{J}_{\mathrm{U}}$ can be respectively formulated as
\begin{equation}\label{CellularInterferenceChannel}
\small{
\mathbf{G}_{i,j}^{\mathrm{C}} \!=\!
\sum_{l=0}^{L_{i,j}^{\mathrm{C}}} \alpha_{i,j,(l)}^{\mathrm{C}} \mathbf{a}_{0}\left( \vartheta_{i,j,(l)}\right) \left(\mathbf{a}_{j}^{\mathrm{A}}\!\left( \phi_{j,i,(l)}^{\mathrm{C}}\right)\right)^{H}\!\!,
\forall i \in \mathcal{I},
\! j \in \mathcal{J}_{\mathrm{A}},}
\end{equation}
\begin{equation}\label{WiGigChannel}
\small{\mathbf{g}_{j,k}^{\mathrm{W}} \!=\! \sum_{l=0}^{L_{j,k}^{\mathrm{W}}} \alpha_{j,k,(l)}^{\mathrm{W}} \mathbf{a}_{j}^{\mathrm{U}}\left( \phi_{j,k,(l)}^{\mathrm{W}}\right), \forall k \in \mathcal{K},
 j \in \mathcal{J}_{\mathrm{U}},}
\end{equation}
\noindent where $L_{i,j}^{\mathrm{C}}$ indexes the number of MPCs from WiGig AP $j\in\mathcal{J}_{\mathrm{A}}$ to the NR-U AP $i\in \mathcal{I}$, $L_{j,k}^{\mathrm{W}}$ denotes the number of MPCs from NR-U user $k\in\mathcal{K}$ to the WiGig device $j\in\mathcal{J}_{\mathrm{U}}$, and $\alpha_{i,j,(l)}^{\mathrm{C}}$ and $\alpha_{j,k,(l)}^{\mathrm{W}}$ represent the corresponding complex gains in the $l$-th MPC. $\mathbf{a}_{0}\left( \vartheta_{i,j,(l)}^{\mathrm{C}}\right) \in \mathbb{C}^{M_{0}\times1}$  is the array response for NR-U AP $i$ evaluated at the corresponding AoA,
and $\mathbf{a}_{j}^{A}(\phi_{(l)}) \in \mathbb{C}^{M_{j}^{\mathrm{A}}\times1}$ and $\mathbf{a}_{j}^{U}(\phi_{(l)}) \in \mathbb{C}^{M_{j}^{\mathrm{U}}\times1}$ are the antenna array steering vector of WiGig AP and WiGig user at the corresponding AoA and AoD, respectively.

\subsection{WiGig Interference Description}
To improve the mmWave connections and ensure coexistence, we should mitigate the interference incurred by the uplink mmWave NR-U transmissions to the WiGig receivers.
Here, the highly-directional WiGig transmission is protected by appropriate user scheduling and power control for NR-U user.
The aggregate estimated interference power $I_{j}^{\mathrm{W}}(t)$ caused by  NR-U users to each WiGig user $j$ should be less than a predefined maximum inter-RAT interference threshold $I_{\max}^{\mathrm{W}}$ as
\begin{equation}\label{WiGigInterferenceMitigation}
\small{
I_{j}^{\mathrm{W}}(t) = \sum_{k\in\mathcal{K}}\sum_{u\in\mathcal{U}} x_{u,k} p_{k}^{\mathrm{C}} \|\left(\mathbf{v}_{j}^{\mathrm{U}}\right)^H  \mathbf{g}_{j,k}^{\mathrm{W}}\|^2 \leq I_{\max}^{\mathrm{W}},~
\forall j \in \mathcal{J}_{\mathrm{R}}(t),}
\end{equation}
\noindent where $\mathcal{J}_{\mathrm{R}}(t)$ represents the set of receiving WiGig users at SP $t$.

MmWave NR-U users can coexist with the incumbent WiGig devices when \eqref{WiGigInterferenceMitigation} is satisfied. The data transmission of NR-U AP can be synchronized with WiGig by observing the transmission occupation duration $\tau(t)$ of each SP $t$.


\subsection{Data Transmission}
The NR-U AP maintains a traffic buffer queue $Q_{k}(t)$ evolving for NR-U user $k \in \mathcal{K}$ at SP $t$ as
\begin{equation}\label{Queue}
\small{
Q_{k}(t+1) = [Q_{k}(t) - R_{k} (t) \tau(t)]^+ + A_k(t), \forall k \in \mathcal{K},  \forall t \in \mathcal{T},}
\end{equation}
\noindent where $Q_{k}(0)=0$,
$R_{k}(t)$ denotes the data rate of NR-U user $k$ during SP $t$, $\tau(t)$ denotes the transmission duration of SP $t$, and $A_k(t)$ denotes the traffic arrival volume.

Intuitively, the queue lengths influence NR-U users' latency, which can be compensated by improving their transmission data rates\footnote{Since the durations of BFT and WiGig user grouping would not influence the optimal result, we have ignored them in the calculation of latency.}.
Therefore, we novelly propose to determine SIC decoding order based on the queue lengths of $Q_k(t)$ and $Z_k^{\mathrm{C}}(t), \forall k$, where $Z_{k}^{\mathrm{C}}(t)$ is the virtual queue that would be introduced later.
Specifically, NR-U users with larger queue backlogs would be decoded later to suppress intra-beam interferences from users with shorter queue backlogs, and thus increase data rates of users with larger queues.
At each frame $t$, the packets of NR-U users are re-indexed and decoded in an ascending order of queue backlogs as
\begin{equation}
\small{
Q_{1}(t) + Z_{1}^{\mathrm{C}}(t)
\leq Q_{2}(t) + Z_{2}^{\mathrm{C}}(t)
\leq \cdots
 \leq Q_{K}(t) + Z_{K}^{\mathrm{C}}(t).}
\end{equation}

For each RF chain $u \in \mathcal{U}$, define $I(u) = i$, if $u \in \mathcal{N}_{i}^{\mathrm{C}}$, i.e., beam $u$ corresponds to NR-U AP $i \in \mathcal{I}$.
With the queue-aware decoding order, the uplink received signal after SIC for NR-U user $k$ associated with the NR-U AP $i$ on RF chain $u \in \mathcal{N}_{i}^{\mathrm{C}}$ can be written as
\begin{equation}\label{signal}
\small{
\begin{split}
 y_{u,k} (t) &=
 \underbrace{x_{u,k}  \mathbf{d}_{u}^H \mathbf{F}_{I(u)}^{H} \mathbf{{h}}_{I(u),k} \sqrt{p_{k}^{\mathrm{C}}} s_{k}^{\mathrm{C}}}_{\text{desired signal}}
 \\ & + \underbrace{ \sum_{k'=k+1}^{K} x_{u,k'} \mathbf{d}_{u}^{H} \mathbf{F}_{I(u)}^{H} \mathbf{{h}}_{I(u),k'} \sqrt{p_{k'}^{\mathrm{C}}} s_{k'}^{\mathrm{C}}}_{\text{intra-beam interference}}
 \\ & + \underbrace{ \sum_{\substack{u'\ne u}} \sum_{\substack{k'\ne k}} x_{u',k'} \mathbf{d}_{u}^H \mathbf{F}_{I(u)}^{H} \mathbf{{h}}_{I(u),k'} \sqrt{p_{k'}^{\mathrm{C}}}  s_{k'}^{\mathrm{C}}}_{\text{inter-beam/AP interference}}
 \\ &+ \underbrace{ \sum_{j\in\mathcal{J}_\mathrm{T}(t)} \mathbf{d}_{u}^H \mathbf{F}_{I(u)}^{H} \mathbf{{G}}_{I(u),j}^{\mathrm{C}} \mathbf{V}_{j}^{\mathrm{A}} \sqrt{p_{j}^{\mathrm{W}}} \mathbf{s}_{j}^{\mathrm{W}}}_{\text{inter-RAT interference}}
+ n_{0},
\end{split}}
\end{equation}
where $s_{k}^{\mathrm{C}}$ denotes the data symbol of NR-U user $k$, $p_{k}^{\mathrm{C}}$ represents the power control for NR-U user $k$,
and $\mathbf{d}_{u}$ means the $u$-th column of digital beamformer $\mathbf{D}=[\mathbf{D}_1,\mathbf{D}_2,...,\mathbf{D}_I]$.
$\mathbf{s}_{j}^{\mathrm{W}}$ is the data symbol of WiGig AP $j$, and $\mathbf{V}_{j}^{\mathrm{A}}$ is the corresponding hybrid beamformer.
$n_{0} \sim \mathcal{CN}(0,\sigma^2)$ is the noise and $\sigma^2$ is the noise power spectral density.

Therefore, the data rate of NR-U user $k$ transmitted in beam $u \in \mathcal{U}$ can be formulated as
\begin{equation}\label{mmW_rate}
\small{
\begin{split}
R_{u,k}(t)=& \log_2 \left(1 + SINR_{u,k}(t) \right) \\
=& \log_2 \bigg(1 + x_{u,k} |\mathbf{d}_{u}^H \mathbf{F}_{I(u)}^H \widetilde{\mathbf{h}}_{u,k}|^2 p_{k}^{\mathrm{C}}\times
\\&\bigg(\sum_{(u',k')\in\mathcal{S}_{u,k}} x_{u',k'} |\mathbf{d}_{u}^H \mathbf{F}_{I(u)}^H \widetilde{\mathbf{h}}_{u,k'}|^2 p_{k'}^{\mathrm{C}}
\\&+ \sum_{j\in\mathcal{J}_{\mathrm{T}}} \|\mathbf{d}_{u}^H \mathbf{F}_{I(u)}^H \widetilde{\mathbf{G}}_{u,j}^{\mathrm{C}}\|^2 p_{j}^{\mathrm{W}} +{\sigma}^2\bigg)^{-1} \bigg),
\end{split}}
\end{equation}
where $SINR_{u,k}(t)$ is the signal-to-interference-and-noise ratio (SINR) of NR-U user $k$ in user group $u$, $\mathcal{S}_{u,k} = \{ (u',k') | u'=u, k' > k \} \cup \{ (u',k') | u' \ne u,  k' \ne k \}$ is the set of spatial streams causing intra-beam and inter-beam interference to NR-U user $k$ in user group $u$ after SIC,
$\widetilde{\mathbf{h}}_{u,k} = \mathbf{h}_{I(u),k}$, and $\widetilde{\mathbf{G}}_{u,j}^{\mathrm{C}} = \mathbf{G}_{I(u),j}^{\mathrm{C}} \mathbf{V}_{j}^{\mathrm{A}}$.
The achievable data rate of NR-U user $k$ during SP $t$ can be expressed as $R_{k}(t) = \sum_{u\in\mathcal{U}} R_{u,k}(t)$.

\section{Problem Formulation}
Our goal in this work is to design a joint user grouping, beam coordination and power control strategy to realize massive connectivity and guarantee latency for the NR-U network as well as ensuring performance of WiGig networks. The optimization problem can be formulated as
\setcounter{equation}{9}
\begin{equation*}
\small{
\mathcal{P}1:~~ \max_{\boldsymbol{\omega}(t)} \lim _{T\to\infty}  \frac{1} {T} \sum_{t=0}^{T-1}  \mathbb{E} } \left[R(t)\right] \tag{\ref{P1}}
\end{equation*}
\begin{subequations}\label{P1}
subject to \eqref{WiGigInterferenceMitigation},
\begin{equation}\label{AvgWiGigInterference}
\small{\lim _{T\to\infty} \frac{1}{T} \sum_{t=0}^{T-1} \mathbb{E}[I^{\mathrm{W}}(t)] \leq \bar{I}^{\mathrm{W}}, ~\forall k \in \mathcal{K},}
\end{equation}
\begin{equation}\label{Delay}
\small{\lim _{T\to\infty} \frac{1}{T} \sum_{t=0}^{T-1} \mathbb{E}[Q_{k}(t)] \leq \bar{Q}_{k}, ~\forall k \in \mathcal{K},}
\end{equation}
\begin{equation}\label{MinRate}
\small{x_{u,k}R_{\min} \leq R_{u,k}(t) \leq \frac{Q_{k}(t)}{\tau(t)},~ \forall u \in \mathcal{U}, k \in \mathcal{K}, t \in \mathcal{T},}
\end{equation}
\begin{equation}\label{Power}
\small{0 \leq p_{k}^{\mathrm{C}}(t) \leq \sum_{u\in\mathcal{U}} x_{u,k}(t) P^{\max},~ \forall k \in \mathcal{K}, t \in \mathcal{T},}
\end{equation}
\begin{equation}\label{AnalogBF}
\small{|\mathbf{F}_{i}(m,n)(t)| \!=\! 1, ~\forall  m \in \{1,2,..,M_{0}\}, n \in \mathcal{N}_{i}^{\mathrm{C}}, i\in\mathcal{I}, t \in \mathcal{T},}
\end{equation}
\begin{equation}\label{ConnPerDevice}
\small{\sum_{u\in\mathcal{U}} x_{u,k}(t) = 1,~ \forall k \in \mathcal{K}, t \in \mathcal{T},}
\end{equation}
\begin{equation}\label{Bin}
\small{x_{u,k}(t) \in \{0,1\},~ \forall u \in \mathcal{U}, k \in \mathcal{K}, t \in \mathcal{T},}
\end{equation}
\end{subequations}
where $\boldsymbol{\omega}(t) = [\mathbf{X}(t),\mathbf{F}(t),\mathbf{D}(t),\mathbf{p}^{\mathrm{C}}(t)]$. $R(t) = \sum_{u\in\mathcal{U}} \sum_{k\in\mathcal{K}} R_{u,k}(t)$ denotes the spectrum efficiency of NR-U network in SP $t$.
$I^{\mathrm{W}}(t) = \sum_{j\in\mathcal{J}_{\mathrm{R}(t)}} I_{j}^{\mathrm{W}}(t)$ denotes the inter-RAT interference suffered by all the WiGig users.
Constraint \eqref{WiGigInterferenceMitigation} limits the inter-RAT interference suffered by each WiGig users,
and constraint \eqref{AvgWiGigInterference} restricts the long-term average inter-RAT interference to WiGig.
Constraint \eqref{Delay} ensures the time-averaged latency of NR-U users, which is proportional to the average queue length according to the Little's theorem \cite{Bertsekas1992LittleLaw}.
Constraint \eqref{AnalogBF} signifies the unit-modulus constraint of analog beamforming.
Constraint \eqref{MinRate} guarantees the minimum data rate requirement of NR-U users, and ensures the data rate less than the queue backlog size.
Moreover, constraint \eqref{Power} denotes the transmit power limitation and constraint \eqref{ConnPerDevice} guarantees that each NR-U user occupies one RF chain.

To tackle the infinite horizon problem $\mathcal{P}1$,
we utilize Lyapunov optimization to transform $\mathcal{P}1$ into a tractable single frame problem.
We define virtual queues $Z^{\mathrm{W}}(t)$ and $Z_{k}^{\mathrm{C}}(t)$ for constraints \eqref{AvgWiGigInterference} and  \eqref{Delay} as
\begin{equation}\label{ZWiGig}
\small{Z^{\mathrm{W}}(t+1) = [ Z^{\mathrm{W}}(t)  -   \bar{I}^{\mathrm{W}} + I^{\mathrm{W}}(t)]^+, ~\forall t \in \mathcal{T},}
\end{equation}
\begin{equation}\label{ZNR}
\small{Z_{k}^{\mathrm{C}}(t+1) = [ Z_{k}^{\mathrm{C}}(t) -  \bar{Q}_{k} + Q_{k}(t+1) ]^+, ~\forall k \in \mathcal{K}, t \in \mathcal{T},}
\end{equation}
\noindent where $Z^{\mathrm{W}}(0) = 0$ and $Z_{k}^{\mathrm{C}}(0) = 0, \forall k\in\mathcal{K}$.

The Lyapunov function can be formulated as a sum of virtual queue length squares, given by
\begin{equation}\label{LyaounovFun}
\small{
L(\mathbf{Z}(t)) =
\frac{1}{2} \left[\sum_{k\in\mathcal{K}} (Z_{k}^{\mathrm{C}}(t))^2 + (Z^{\mathrm{W}}(t))^2 \right],}
\end{equation}
where $\mathbf{Z}(t) = [Z_{1}^{\mathrm{C}}(t), Z_{2}^{\mathrm{C}}(t), ..., Z_{K}^{\mathrm{C}}(t), Z^{\mathrm{W}}(t)]$.

Moreover, the Lyapunov drift function is expressed as
\begin{equation}\label{LyapunovDrift}
\small{\Delta \mathbf{Z}(t) = \mathbb{E} \left[ L(\mathbf{Z}(t+1)) - L(\mathbf{Z}(t)) \mid \mathbf{Z}(t) \right].}
\end{equation}

Denote $\mathbf{\Theta}(t) = [\mathbf{Q}(t),\mathbf{Z}(t)]$, where $\mathbf{Q}(t) = \left[Q_{1}(t),Q_{2}(t), ..., Q_K(t)\right]$.
Then, the drift-plus-penalty is defined to minimize a bound on the drift-plus-penalty expression by selecting control actions, which can be formulated as
\begin{equation}\label{DriftPenalty}
\small{ \Delta_{v}(t) =
 V_{0}\sum_{k\in\mathcal{K}}\Delta Z_{k}^{\mathrm{C}}(t) +
 V_{1}\Delta Z^{\mathrm{W}}(t) -
 \mathbb{E} [R(t)|\mathbf{\Theta}(t)],}
\end{equation}
\noindent where $V_0$ and $V_1$ are non-negative parameters controlling the tradeoff between WiGig performance degradation, NR-U traffic delay and spectral efficiency.

We then derive an upper bound for $\Delta_{v}(t)$ in the following lemma, which would play a critical role throughout the analysis of $\mathcal{P}1$.

\begin{lemma}\label{Lemma_DriftPenalty}
The upper bound for the drift-plus-penalty \eqref{DriftPenalty} can be written as
\begin{equation}\label{DriftPenaltyUpperBound}
\small{
\begin{split}
& \Delta_{v}(t)
\leq B + \mathbb{E} \bigg[ V_{1}I^{\mathrm{W}}(t)\left(Z^{\mathrm{W}}(t)-\bar{I}^{\mathrm{W}}\right) -
\\
& V_{0}\sum_{k\in\mathcal{K}} \left(Q_{k}(t)+Z_{k}^{\mathrm{C}}(t)\right) R_{k}(t)\tau(t)
-R(t) \mid \mathbf{\Theta}(t) \bigg],
\end{split}}
\end{equation}
\noindent where $B=B_1+B_2$.
$B_1 = V_0\frac{\sum_{k\in\mathcal{K}} \left[(\bar{Q}_{k})^2+  (A_{k}^{\max})^2 \right]}{2}
+ V_{1}\frac{(\bar{I}^{\mathrm{W}}) ^2  + (|\mathcal{J}_{\mathrm{R}}| \gamma_{\mathrm{W}})^2}{2} $ is constant,
and $B_2 = V_{0}\sum_{k\in\mathcal{K}} [ Q_{k}^2(t) + Z_{k}^{\mathrm{C}}(t)(Q_{k}(t) - \bar{Q}_{k})]$ is fixed at each frame.
\end{lemma}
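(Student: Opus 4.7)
The plan is to carry out a standard Lyapunov drift analysis, specialized to the two virtual queues $Z^{\mathrm{W}}(t)$ and $Z_{k}^{\mathrm{C}}(t)$ defined in \eqref{ZWiGig}--\eqref{ZNR}. The core algebraic ingredient is the one-step square identity $([x]^{+})^{2}\leq x^{2}$, which is the canonical way to absorb the positive-part operator appearing in the virtual queue updates.

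First I would handle each virtual queue independently. Applying $([x]^{+})^{2}\leq x^{2}$ to \eqref{ZWiGig} and expanding the square gives
\begin{equation*}
\tfrac{1}{2}\bigl[(Z^{\mathrm{W}}(t+1))^{2}-(Z^{\mathrm{W}}(t))^{2}\bigr]\leq \tfrac{1}{2}(I^{\mathrm{W}}(t)-\bar{I}^{\mathrm{W}})^{2}+Z^{\mathrm{W}}(t)(I^{\mathrm{W}}(t)-\bar{I}^{\mathrm{W}}),
\end{equation*}
and I would then use $(a-b)^{2}\leq a^{2}+b^{2}$ together with the uniform bound $I^{\mathrm{W}}(t)\leq|\mathcal{J}_{\mathrm{R}}|\gamma_{\mathrm{W}}$ to dominate the first term by the constant $\tfrac{1}{2}[(\bar{I}^{\mathrm{W}})^{2}+(|\mathcal{J}_{\mathrm{R}}|\gamma_{\mathrm{W}})^{2}]$. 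Multiplying by $V_{1}$ recovers exactly the WiGig piece of $B_{1}$ together with the cross-term $V_{1}I^{\mathrm{W}}(t)(Z^{\mathrm{W}}(t)-\bar{I}^{\mathrm{W}})$ that appears in the statement.

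Next I would repeat the manoeuvre for $Z_{k}^{\mathrm{C}}(t)$, but this time exploiting constraint \eqref{MinRate}: since $R_{k}(t)\tau(t)\leq Q_{k}(t)$, the positive part in \eqref{Queue} is inactive, so $Q_{k}(t+1)=Q_{k}(t)-R_{k}(t)\tau(t)+A_{k}(t)$ as an identity. Substituting this into the cross-term $2Z_{k}^{\mathrm{C}}(t)(Q_{k}(t+1)-\bar{Q}_{k})$ produces precisely $2Z_{k}^{\mathrm{C}}(t)(Q_{k}(t)-\bar{Q}_{k})-2Z_{k}^{\mathrm{C}}(t)R_{k}(t)\tau(t)+2Z_{k}^{\mathrm{C}}(t)A_{k}(t)$, which furnishes the $-V_{0}Z_{k}^{\mathrm{C}}(t)R_{k}(t)\tau(t)$ piece of the right-hand side and the $Z_{k}^{\mathrm{C}}(t)(Q_{k}(t)-\bar{Q}_{k})$ contribution to $B_{2}$. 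The remaining quadratic residual $(Q_{k}(t+1)-\bar{Q}_{k})^{2}$ is the delicate piece: I plan to expand it as $(Q_{k}(t)-R_{k}(t)\tau(t))^{2}+\ldots$ so that the $-2Q_{k}(t)R_{k}(t)\tau(t)$ term is exposed and bundled into the penalty term, while the surviving $Q_{k}(t)^{2}$ becomes the remaining frame-fixed contribution to $B_{2}$ and the $\bar{Q}_{k}$, $A_{k}^{\max}$ pieces are absorbed via $A_{k}(t)\leq A_{k}^{\max}$ into the constant $B_{1}$.

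Finally I would sum over $k$, weight by $V_{0}$, add the WiGig portion, subtract the penalty $\mathbb{E}[R(t)\mid\mathbf{\Theta}(t)]$, and take the conditional expectation given $\mathbf{\Theta}(t)$. Re-grouping: the deterministic constants form $B_{1}$, the terms that are functions of $\mathbf{Q}(t)$ and $\mathbf{Z}(t)$ alone (known at the beginning of frame $t$ and therefore unaffected by the conditional expectation) collect into $B_{2}$, and what remains is exactly the decision-dependent right-hand side in \eqref{DriftPenaltyUpperBound}. The main obstacle I anticipate is bookkeeping: the expansion of $(Q_{k}(t+1)-\bar{Q}_{k})^{2}$ spawns several cross-terms (e.g.\ $R_{k}\tau\cdot A_{k}$ and $(R_{k}\tau)^{2}$ type terms) that must either be shown nonpositive, absorbed into $B_{1}$ using the uniform bounds on $A_{k}$, or cancelled against contributions from the cross-term, and it is easy to pick up stray constants unless the inequalities $(a-b)^{2}\leq a^{2}+b^{2}$ and $R_{k}(t)\tau(t)\leq Q_{k}(t)$ are applied in a carefully chosen order.
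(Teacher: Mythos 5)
Your proposal follows essentially the same route as the paper's Appendix A: bound each virtual-queue drift via $([x]^{+})^{2}\leq x^{2}$, use $R_{k}(t)\tau(t)\leq Q_{k}(t)$ from \eqref{MinRate} to make the physical queue update exact, bound $I^{\mathrm{W}}(t)$ and $A_{k}(t)$ by their uniform maxima, and sort the residue into the constant $B_{1}$, the frame-fixed $B_{2}$, and the decision-dependent terms. The only divergence is cosmetic: your WiGig cross-term comes out as $Z^{\mathrm{W}}(t)\left(I^{\mathrm{W}}(t)-\bar{I}^{\mathrm{W}}\right)$, which is what the expansion actually gives, whereas the lemma writes $I^{\mathrm{W}}(t)\left(Z^{\mathrm{W}}(t)-\bar{I}^{\mathrm{W}}\right)$; the two differ only by a term that does not depend on the decision variables, so the resulting problem $\mathcal{P}2$ is unaffected.
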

{\it{Proof}:} See Appendix A.

Note that the upper bound of $\Delta_{v}(t)$ coincides with the objective function in $\mathcal{P}1$. Therefore, we transfer $\mathcal{P}1$ into the following auxiliary problem $\mathcal{P}2$ to facilitate the performance analysis.
\begin{equation}\label{P2}
\small{
\begin{split}
\mathcal{P}2:
\max_{\boldsymbol{\omega}(t)}~ &
\bigg[ R(t)
+ V_{0}\sum_{k\in\mathcal{K}} (Q_{k}(t) + Z_{k}^{\mathrm{C}}(t)) R_{k}(t)\tau(t)
 \\&
- V_{1}Z^{\mathrm{W}}(t) I^{\mathrm{W}}(t)  \bigg],
\end{split}}
\end{equation}
subject to \eqref{WiGigInterferenceMitigation}, \eqref{MinRate} - \eqref{Bin}.

At the beginning of each SP $t$, NR-U network observes system environment and solves $\mathcal{P}_2$ to dynamically and jointly schedule user grouping, coordinated hybrid beamforming and power control.
In the following section, we would ignore the SP index $t$ to present a optimization solution for $\mathcal{P}_2$ for conciseness.

\section{Joint User Selection, Beam Coordination and Power Control}
Intuitively, $\mathcal{P}2$ is a nonconvex MINLP with coupling constraints, which is NP-hard.
To jointly solve this user grouping, coordinated hybrid beamforming and power control problem, a \emph{PDD-CCCP} method is proposed to find a desirable solution.
First, we equivalently transfer the nonconvex SINR expression and the binary user grouping constraints by introducing auxiliary variables.
Thereafter, the transformed function is further recast into an AL problem by penalizing and dualizing the coupling equality constraints.
Since the AL problem can be deemed as a difference of convex (D.C.) programming,
we jointly solve it by incorporating CCCP \cite{Yuille2003CCCP} into the strong-convergence guaranteed PDD framework \cite{Shi2017Penalty}.
Based on the PDD framework, the proposed method leverages a dual loop structure.
In the inner loop, the AL problem is solved by conjunctively utilizing CCCP and inexact BCU algorithms.
Specifically, the nonconvex part of the AL problem is approximately transferred with CCCP.
Thereafter, the approximated AL problem is decomposed into several nested convex subproblems by inexact BCU method, where
each subproblem can be sequentially optimized to an inexact solution at each iteration.
In the outer loop, penalty parameters and dual variables are updated.

\subsection{Problem Transformation}
The SINR expression and the unit-modulus constraints \eqref{AnalogBF} result in the  nonconvexity of $\mathcal{P}2$, which makes the problem hard to be solved with conventional schemes.
Hence, we introduce a fully digital beamformer $\mathbf{W} = [\mathbf{W}_{1},\mathbf{W}_{2},...,\mathbf{W}_{I}]\in\mathbb{C}^{M_0 \times U}$, $U=N_{0}I$, and let auxiliary variable vectors $\boldsymbol{\mu}_{u,k} = [\mu_{u,1,k}, \mu_{u,2,k}, ..., \mu_{u,U,k}] \in \mathbb{R}^{1\times U}$ and $\xi_{u,k}$ denote the effective data and interference gains from NR-U users and WiGig TX devices to the NR-U APs received by each RF chain $u$, which are defined as
\begin{equation}\label{auxiliaryBF}
\small{\mathbf{W}_{i} = \mathbf{F}_{i}\mathbf{D}_{i}, \forall i \in \mathcal{I},}
\end{equation}
\begin{equation}\label{auxiliaryMu}
\small{\boldsymbol{\mu}_{u,k} = \mathbf{x}_{k}^{T} |\mathbf{w}_{u}^{H} \mathbf{\widetilde{h}}_{u,k}|^2 p_{k}^{\mathrm{C}},
~ \forall u \in \mathcal{U}, k \in \mathcal{K},}
\end{equation}
\begin{equation}\label{auxiliaryXi}
\small{\xi_{u,j} = \|\mathbf{w}_{u}^{H} {\mathbf{\widetilde{G}}}_{u,j}^{\mathrm{C}}\|^2 p_{j}^{\mathrm{W}},
~ \forall u \in \mathcal{U}, j \in \mathcal{J}_{\mathrm{T}}.}
\end{equation}

We further denote $\gamma_{u,k}$ as the SINR of NR-U user $k$ received by RF chain $u$, then the SINR expression in \eqref{mmW_rate} can be rewritten as
\begin{equation}\label{auxiliarySINR}
\small{
\gamma_{u,k} \left(
\sum_{(u',k') \in \mathcal{S}_{u,k}}\mu_{u,u',k'}
+ \sum_{j\in\mathcal{J}_{\mathrm{T}}} \xi_{u,j}
+ {\sigma}^2 \right)
 = {\mu}_{u,u,k}.}
\end{equation}

To deal with the binary user grouping parameters, we introduce auxiliary variables $\mathbf{\widetilde{X}} = [\mathbf{\widetilde{x}}_1,...,\mathbf{\widetilde{x}}_K] \in \mathbb{C}^{U \times K}$.
Hence, the binary user grouping constraints can be equivalently replaced by the following constraints:
\begin{equation}\label{EquvX}
\small{\mathbf{x}_{k} - \mathbf{\widetilde{x}}_{k} = \mathbf{0},  ~ \forall k \in \mathcal{K},}
\end{equation}
\begin{equation}\label{EquvX2}
\small{x_{u,k}(1-\widetilde{x}_{u,k}) = 0, ~ \forall u \in \mathcal{U},k \in \mathcal{K},}
\end{equation}
\begin{equation}\label{ContinuousVar}
\small{\mathbf{0} \leq \mathbf{\widetilde{x}}_{k} \leq \mathbf{1}, ~\forall k \in \mathcal{K}.}
\end{equation}

In this way, $\mathcal{P}2$ can be rearranged as
\begin{equation}
\small{
\begin{split}
\mathcal{P}3:
\max_{\boldsymbol{\omega},\boldsymbol{\widehat{\omega}}}&
\sum_{k\in\mathcal{K}} \sum_{u\in\mathcal{U}}
 \left[ V_0(Q_{k}+Z_{k}^{\mathrm{C}})\tau+1 \right] \log_2(1+\gamma_{u,k}) \\&
- V_1 Z^{\mathrm{W}} \sum_{k\in\mathcal{K}} \sum_{j \in \mathcal{J}_{\mathrm{R}}} \|\mathbf{V}_{j}^H \mathbf{g}_{j,k}^{\mathrm{W}}\|^2 p_{k}^{\mathrm{C}},
\end{split}}
\tag{\ref{P4}}
\end{equation}
subject to \eqref{auxiliaryBF} - \eqref{EquvX2},
\begin{subequations}\label{P4}
\begin{equation}\label{WiGigInterferenceMitigation_P3}
\small{ \sum_{k\in\mathcal{K}} \|\mathbf{V}_{j}^H \mathbf{g}_{j,k}^{\mathrm{W}}\|^2 p_{k}^{\mathrm{C}} \leq \gamma_{\mathrm{W}},
\forall j \in \mathcal{J}_{\mathrm{R}},}
\end{equation}
\begin{equation}\label{MinRate_P3}
\small{\log_2 (1+\gamma_{u,k}) \geq x_{u,k}R_{\min}, ~ \gamma_{u,k} \leq \gamma_{k}^{\max},  \forall u \in \mathcal{U}, \forall k \in \mathcal{K},}
\end{equation}
\begin{equation}\label{Power_P3}
\small{0 \leq p_{k}^{\mathrm{C}} \leq \sum_{u\in\mathcal{U}} x_{u,k}(t) P^{\max},~ \forall k \in \mathcal{K},}
\end{equation}
\begin{equation}\label{AnalogBF_P3}
\small{|\mathbf{F}_{i}(m,n)| = 1, ~ \forall m \in \{1,2,...,M_{0}\}, n \in \mathcal{N}_{i}^{\mathrm{C}}, i \in \mathcal{I},}
\end{equation}
\begin{equation}\label{ConnPerDevice_P3}
\small{\sum_{u\in\mathcal{U}} x_{u,k} = 1,~ \forall k \in \mathcal{K},}
\end{equation}
\begin{equation}\label{ContinuousVar_P3}
\small{\mathbf{0} \leq \mathbf{\widetilde{x}}_{k} \leq \mathbf{1}, ~\forall k \in \mathcal{K},}
\end{equation}
\end{subequations}
where $\boldsymbol{\omega} = [\mathbf{X},\mathbf{D},\mathbf{F},\mathbf{p}^{\mathrm{C}}]$,
$ \boldsymbol{\widehat{\omega}} = [\mathbf{\widetilde{X}},\mathbf{W},\boldsymbol{\gamma},\boldsymbol{\mu},\boldsymbol{\xi}]$,
and $\gamma_{k}^{\max} = 2^{\frac{Q_{k}}{\tau}} - 1$.
Note that in \eqref{P4} and \eqref{WiGigInterferenceMitigation_P3}, we have utilized the fact that $\sum_{u\in\mathcal{U}} x_{u,k} p_{k}^{\mathrm{C}} = p_{k}^{\mathrm{C}}$, which is implied by \eqref{ConnPerDevice_P3}.

Resorting to the PDD framework that realizes the integration of ADMM and penalty method, $\mathcal{P}3$ can be further formulated as the following AL problem with decoupled constraints:
\begin{equation}\label{AugLagrangian}
\small{
\max_{\boldsymbol{\omega},\boldsymbol{\widehat{\omega}}} \mathcal{L}(\boldsymbol{\omega},\boldsymbol{\widehat{\omega}})
= f(\boldsymbol{\omega},\boldsymbol{\widehat{\omega}})  - \sum_{i=1}^{4} \mathcal{L}_{i}(\boldsymbol{\omega},\boldsymbol{\widehat{\omega}}),}
\end{equation}
\noindent subject to \eqref{WiGigInterferenceMitigation_P3} - \eqref{AnalogBF_P3}, and \eqref{ContinuousVar_P3}.

Here, $f(\boldsymbol{\omega},\boldsymbol{\widehat{\omega}})
 = \sum_{k\in\mathcal{K}} \sum_{u\in\mathcal{U}}
 \big[ V_0(Q_{k}+Z_{k}^{\mathrm{C}})\tau +1\big]\log_2\big(1+\gamma_{u,k}\big)  - V_1 Z^{\mathrm{W}} \sum_{k\in\mathcal{K}} \sum_{j \in \mathcal{J}_{\mathrm{R}}} \|\mathbf{V}_{j}^H \mathbf{g}_{j,k}^{\mathrm{W}}\|^2 p_{k}^{\mathrm{C}}$,
and
 $\mathcal{L}_{i}(\boldsymbol{\omega},\boldsymbol{\widehat{\omega}}), i = \{1,2,3,4\}$, are the additional penalty terms corresponding to equality constraints \eqref{ConnPerDevice_P3} and \eqref{auxiliaryBF} - \eqref{EquvX2}, given by
\begin{equation}\label{AL1}
\small{
\begin{split}
\mathcal{L}_{1}(\boldsymbol{\omega},\boldsymbol{\widehat{\omega}}) =&
\frac{1}{2\rho}\sum_{k\in\mathcal{K}}\left(\sum_{u\in\mathcal{U}}\mathbf{c}_{u}^{T}\mathbf{x}_{k}-1+\rho\lambda_{k}^{\mathrm{U}}\right)^{2}
\\&+ \frac{1}{2\rho}\sum_{k\in\mathcal{K}}\left\| \mathbf{x}_{k}-\mathbf{\widetilde{x}}_{k}+\rho\boldsymbol{\lambda}_{k}^{\mathrm{X}}\right\| ^{2}
\\& + \frac{1}{2\rho}\sum_{k\in\mathcal{K}}\sum_{u\in\mathcal{U}}\left[\mathbf{c}_{u}^{T}\mathbf{x}_{k}(1-\widetilde{x}_{u,k})+\rho\widetilde{\lambda}_{u,k}^{\mathrm{X}}\right]^{2},
\end{split}}
\end{equation}
\begin{equation}\label{AL2}
\small{
\mathcal{L}_{2}(\boldsymbol{\omega},\boldsymbol{\widehat{\omega}})
= \frac{1}{2\rho} \sum_{u} \|\mathbf{w}_{u}-\mathbf{F}_{I(u)}\mathbf{d}_{u}+\rho\boldsymbol{\lambda}^{\mathrm{W}}_{u}\|^2,}
\end{equation}
\begin{equation}\label{AL_signal}
\small{
\begin{split}
\mathcal{L}_{3}(\boldsymbol{\omega},\boldsymbol{\widehat{\omega}})
=& \frac{1}{2\rho} \sum_{u\in\mathcal{U}}\sum_{k\in\mathcal{K}}
\left\| \boldsymbol{\mu}_{u,k} - \mathbf{x}_{k}^{T} |\mathbf{w}_{u}^H \widetilde{\mathbf{h}}_{u,k}|^2 p_{k}^{\mathrm{C}}  + \rho  \boldsymbol{\lambda}_{u,k}^{\mu} \right\| ^2
\\& + \frac{1}{2\rho} \sum_{u\in\mathcal{U}} \sum_{j\in\mathcal{J}_{\mathrm{T}}} \left( \xi_{u,j} - \|\mathbf{w}_{u}^H  \mathbf{\widetilde{G}}_{u,j}^{\mathrm{C}}\|^2 p_{j}^{\mathrm{W}} + \rho \lambda_{u,j}^{\xi} \right)^2,
\end{split}}
\end{equation}
\begin{equation}\label{AL4}
\small{
\begin{split}
\mathcal{L}_{4}(\boldsymbol{\omega},\boldsymbol{\widehat{\omega}})
=& \frac{1}{2\rho} \sum_{k\in\mathcal{K}} \sum_{u\in\mathcal{U}}
    \bigg[ \gamma_{u,k} \bigg( \sum_{(u',k') \in \mathcal{S}_{u,k}} \mu_{u,u',k'}
 \\& + \sum_{j\in\mathcal{J}_{\mathrm{T}}}\xi_{u,j}
+ {\sigma}^2 \bigg)
    - \mu_{u,u,k}
    + \rho\lambda_{u,k}^{\gamma} \bigg]^2,
\end{split}}
\end{equation}
\begin{equation}
\small{
\begin{split}\label{ApproximatedAL_signal}
\mathcal{\widehat{L}}_3 (\boldsymbol{\omega},\boldsymbol{\widehat{\omega}})
=& \frac{1}{2\rho}\sum_{k\in\mathcal{K}}\sum_{u\in\mathcal{U}}
\bigg[ \|\mathbf{x}_{k} |\mathbf{w}_{u}^{H}\widetilde{\mathbf{h}}_{u,k}|^{2}p_{k}^{\mathrm{C}}\|^{2}
- 2\widehat{\boldsymbol{\zeta}}_{u,k}^{\mathrm{C}}\mathbf{x}_{k}p_{k}^{\mathrm{C}}
 +\\& \|\rho\boldsymbol{\lambda}_{u,k}^{\mu}+\boldsymbol{\mu}_{u,k}\|^{2} \bigg]
 + \frac{1}{2\rho}\sum_{u\in\mathcal{U}}\sum_{j\in\mathcal{J}_{\mathrm{T}}} \bigg[ (\|\mathbf{w}_{u}^{H}\mathbf{\widetilde{G}}_{u,j}^{\mathrm{C}}\|^{2}p_{j}^{\mathrm{W}})^{2}
\\& -2p_{j}^{\mathrm{W}}\widehat{\zeta}_{u,j}^{\mathrm{W}}
+(\rho\lambda_{u,j}^{\xi}+\xi_{u,j})^{2} \bigg],
\end{split}}
\end{equation}
where $\rho$ is the non-negative penalty parameter, and $\boldsymbol{\lambda} = \left[\{\lambda_{u,k}^{\mathrm{U}}\}, \{\lambda_{u,k}^{\mathrm{X}}\}, \{\widetilde{\lambda}_{u,k}^{\mathrm{X}}\}, \{\boldsymbol{\lambda}_{u}^{\mathrm{W}}\}, \{\lambda_{u,u',k}^{\mu}\},\{\lambda_{u,j}^{\xi}\},\{\lambda_{u,k}^{\gamma}\}\right]$ denotes the dual variables related to equality constraints \eqref{ConnPerDevice_P3} and the coupled equality constraints \eqref{auxiliaryBF} - \eqref{EquvX2}.
Moreover, $\mathbf{c}_{u}$ in \eqref{AL1} denotes the $u$-th column of identity matrix $\mathbf{I}_{U\times U}$, and $\mathbf{c}_{u}^{T}\mathbf{x}_{k} = x_{u,k}$.

Now, our goal is to find the optimal user grouping, hybrid beam coordination and power control strategy in the AL problem \eqref{AugLagrangian}.
Since the nonconvex term \eqref{AL_signal} can be regarded as a typical D.C. functions \cite{Yuille2003CCCP} with respect to the fully digital beamformer $\mathbf{W}$,
we incorporate CCCP method into PDD framework to transform the nonconvex part of the AL problem.
At each iteration of the inner loop in PDD framework, the AL problem is approximately converted into a multi-convex programming based on CCCP under fixed $\boldsymbol{\lambda}^{\mathrm{\mu}}$ and $\boldsymbol{\lambda}^{\mathrm{\xi}}$.
Thereafter, the approximated multi-convex AL problem is decomposed into multiple blocks, sequentially and inexactly solved by inexact BCU method.
The inexact BCU method does not require an exact solution to be obtained with respect to $\mathbf{W}$ during each block update, and thus can accelerate the whole convergence process.
We introduce the following Lemma to obtain the approximated AL problem based on the CCCP definition.


\begin{lemma}\label{Lemma_ApproximatedAL}
The AL problem can be approximated by maximizing the lower bound of $\mathcal{L}(\boldsymbol{\omega},\boldsymbol{\widehat{\omega}})$, given by
\setcounter{equation}{31}
\begin{equation}\label{ApproximatedAL}
\small{
\max_{\boldsymbol{\omega},\boldsymbol{\widehat{\omega}}}
\mathcal{\widehat{L}}(\boldsymbol{\omega},\boldsymbol{\widehat{\omega}})
\!=\! \max_{\boldsymbol{\omega},\boldsymbol{\widehat{\omega}}} \bigg( f(\boldsymbol{\omega},\boldsymbol{\widehat{\omega}})
- \! \!  \! \sum_{i \in \{1,2,4\}}\!  \! \mathcal{L}_{i}(\boldsymbol{\omega},\boldsymbol{\widehat{\omega}})
- \mathcal{\widehat{L}}_{3}(\boldsymbol{\omega},\boldsymbol{\widehat{\omega}}) \bigg),}
\end{equation}
subject to \eqref{WiGigInterferenceMitigation_P3} - \eqref{ContinuousVar_P3}.

Here, $\mathcal{\widehat{L}}_3 (\boldsymbol{\omega},\boldsymbol{\widehat{\omega}})$ in \eqref{ApproximatedAL} is the approximation of $\widehat{L}_3 (\boldsymbol{\omega},\boldsymbol{\widehat{\omega}})$ defined as \eqref{ApproximatedAL_signal},  where $\widehat{\boldsymbol{\zeta}}_{u,k}^{\mathrm{C}}=[\widehat{\zeta}_{u,1,k}^{\mathrm{C}},\widehat{\zeta}_{u,2,k}^{\mathrm{C}},...,\widehat{\zeta}_{u,U,k}^{\mathrm{C}}]$.
Under fixed $\boldsymbol{\lambda}^{\mathrm{\mu}}$ and $\boldsymbol{\lambda}^{\mathrm{\xi}}$, $\{\widehat{\zeta}_{u,u',k}^{\mathrm{C}}\}$ and $\{\widehat{\zeta}_{u,j}^{\mathrm{W}}\}$ are given by
\begin{equation}\label{ApproximatedZeta}
\begin{split}
\widehat{\zeta}_{u,q} =&
(a_{u,q}+\rho[\lambda_{u,q}]^{+})\left[ 2\mathfrak{Re}\{\bar{\mathbf{w}}_{u}^{H}\mathbf{Y}_{u,q}\mathbf{Y}_{u,q}^{H}\mathbf{w}_{u}\}
\right.\\&\left.-\|\bar{\mathbf{w}}_{u}^{H}\mathbf{Y}_{u,q}\|^{2}\right]
-\rho[-\lambda_{u,q}]^{+}\|\mathbf{w}_{u}^{H}\mathbf{Y}_{u,q}\|^{2},
\end{split}
\end{equation}
\noindent where $\bar{\mathbf{w}}_{u}$ denotes the iterative point of $\mathbf{w}_{u}$ obtained from the previous BCU iteration, and $\mathfrak{Re}\{\cdot\}$ stands for the real component of complex number.
For NR-U user $k$ associated with RF chain $u' \in \mathcal{U}$,
we have $a_{u,q} = \mu_{u,u',k}$,
$\lambda_{u,q} = \lambda_{u,u',k}^{\mu}$,
and $\mathbf{Y}_{u,q} = \widetilde{\mathbf{h}}_{u,k}$.
For WiGig device $j \in \mathcal{J}_{\mathrm{T}}$,
we have $a_{u,q} = \xi_{u,j}$,
$\lambda_{u,q} = \lambda_{u,j}^{\xi}$,
and $\mathbf{Y}_{u,q} = \mathbf{\widetilde{G}}_{u,j}^{\mathrm{C}}$.
\end{lemma}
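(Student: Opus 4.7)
The plan is to construct $\widehat{\mathcal{L}}(\boldsymbol{\omega},\boldsymbol{\widehat{\omega}})$ as a global lower bound of $\mathcal{L}(\boldsymbol{\omega},\boldsymbol{\widehat{\omega}})$ by replacing the only nonconvex penalty, $\mathcal{L}_3$, with a majorant $\widehat{\mathcal{L}}_3\geq\mathcal{L}_3$ obtained through the CCCP tangent inequality. Since $\mathcal{L}=f-\mathcal{L}_1-\mathcal{L}_2-\mathcal{L}_3-\mathcal{L}_4$, the implication $\widehat{\mathcal{L}}_3\geq\mathcal{L}_3\;\Rightarrow\;\widehat{\mathcal{L}}\leq\mathcal{L}$ is immediate, so the whole task reduces to bounding the squared penalty terms inside $\mathcal{L}_3$ pointwise.

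First I would expand each square in $\mathcal{L}_3$. Letting $\mu'_{u,u',k}=\mu_{u,u',k}+\rho\lambda_{u,u',k}^{\mu}$ and $\xi'_{u,j}=\xi_{u,j}+\rho\lambda_{u,j}^{\xi}$, every squared term decomposes into a quartic piece $(x_{u',k}p_k^{\mathrm{C}}|\mathbf{w}_u^H\widetilde{\mathbf{h}}_{u,k}|^2)^2$ (resp.\ $(\|\mathbf{w}_u^H\widetilde{\mathbf{G}}_{u,j}^{\mathrm{C}}\|^2 p_j^{\mathrm{W}})^2$), a $\mathbf{w}$-free piece $(\mu'_{u,u',k})^2$ (resp.\ $(\xi'_{u,j})^2$), and a cross piece $-2\mu'_{u,u',k}x_{u',k}p_k^{\mathrm{C}}|\mathbf{w}_u^H\widetilde{\mathbf{h}}_{u,k}|^2$ (resp.\ $-2\xi'_{u,j}\|\mathbf{w}_u^H\widetilde{\mathbf{G}}_{u,j}^{\mathrm{C}}\|^2 p_j^{\mathrm{W}}$). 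The quartic and $\mathbf{w}$-free pieces already appear verbatim in \eqref{ApproximatedAL_signal}, so CCCP must act only on the cross pieces, which fail to be jointly convex because of the sign-indefinite coefficient $\mu'$ (or $\xi'$).

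The key tool is the tangent lower bound for the convex quadratic $\|\mathbf{w}_u^H\mathbf{Y}\|^2=\mathbf{w}_u^H\mathbf{Y}\mathbf{Y}^H\mathbf{w}_u$ at any reference point $\bar{\mathbf{w}}_u$, namely
\begin{equation*}
\|\mathbf{w}_u^H\mathbf{Y}\|^{2}\;\geq\;2\mathfrak{Re}\{\bar{\mathbf{w}}_u^H\mathbf{Y}\mathbf{Y}^H\mathbf{w}_u\}-\|\bar{\mathbf{w}}_u^H\mathbf{Y}\|^{2}.
\end{equation*}
Because $\mu_{u,u',k}$ and $\xi_{u,j}$ are nonnegative by \eqref{auxiliaryMu}--\eqref{auxiliaryXi} while the dual variables are sign-free, I would apply the identity $\rho\lambda=\rho[\lambda]^{+}-\rho[-\lambda]^{+}$ to split $\mu'=(\mu+\rho[\lambda]^{+})-\rho[-\lambda]^{+}$ into a nonnegative and a nonpositive part. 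Multiplying the tangent inequality by the nonpositive factor $-2xp(\mu+\rho[\lambda]^{+})\leq 0$ flips it and yields an affine upper bound of the first part of the cross piece; the contribution from the second part equals $+2xp\rho[-\lambda]^{+}\|\mathbf{w}_u^H\mathbf{Y}\|^{2}$, which is retained as a convex quantity. Specializing $\mathbf{Y}=\widetilde{\mathbf{h}}_{u,k}$ with $(a_{u,q},\lambda_{u,q})=(\mu_{u,u',k},\lambda_{u,u',k}^{\mu})$, and $\mathbf{Y}=\widetilde{\mathbf{G}}_{u,j}^{\mathrm{C}}$ with $(a_{u,q},\lambda_{u,q})=(\xi_{u,j},\lambda_{u,j}^{\xi})$, reproduces precisely the two branches of $\widehat{\zeta}_{u,q}$ in \eqref{ApproximatedZeta}.

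Summing these pointwise inequalities over the indices and scaling by $\tfrac{1}{2\rho}$ reassembles $\widehat{\mathcal{L}}_3$ in \eqref{ApproximatedAL_signal} and certifies $\widehat{\mathcal{L}}_3\geq\mathcal{L}_3$, hence $\widehat{\mathcal{L}}\leq\mathcal{L}$. The main technical obstacle is not analytical but notational: keeping the $[\,\cdot\,]^{+}$ decomposition of $\lambda$ aligned with the correct branch of the tangent bound so that the sign of the factor multiplying $\|\mathbf{w}_u^H\mathbf{Y}\|^{2}$ remains consistent throughout. A secondary check worth performing is that $\widehat{\mathcal{L}}_3$ is indeed block-convex in $\mathbf{w}_u$ (the first branch of $\widehat{\zeta}$ is affine in $\mathbf{w}_u$, and the second enters $\widehat{\mathcal{L}}_3$ with a nonnegative coefficient), which is exactly what the subsequent BCU step requires.
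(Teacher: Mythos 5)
Your proposal is correct and follows essentially the same route as the paper: expand the squares in $\mathcal{L}_{3}$, apply the first-order (tangent) lower bound $\|\mathbf{w}_{u}^{H}\mathbf{Y}\|^{2}\geq2\mathfrak{Re}\{\bar{\mathbf{w}}_{u}^{H}\mathbf{Y}\mathbf{Y}^{H}\mathbf{w}_{u}\}-\|\bar{\mathbf{w}}_{u}^{H}\mathbf{Y}\|^{2}$ to the cross terms, and handle the sign-indefinite dual variables by splitting $\mu+\rho\lambda$ into its nonnegative and nonpositive parts (your $[\cdot]^{+}$ algebra is just the unified form of the paper's explicit case split on $\lambda_{u,q}$). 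Your explicit tracking of the inequality direction ($\widehat{\zeta}\leq\zeta$, hence $\widehat{\mathcal{L}}_{3}\geq\mathcal{L}_{3}$, hence $\widehat{\mathcal{L}}\leq\mathcal{L}$) is in fact cleaner than the paper's wording, but it is the same argument.
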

\begin{proof}
Define $\boldsymbol{\zeta}_{u,k}^{\mathrm{C}}=(\boldsymbol{\mu}_{u,k}+\rho\boldsymbol{\lambda}_{u,k}^{\mu}) |\mathbf{w}_{u}^{H} \widetilde{\mathbf{h}}_{u,k}|^2$
and ${\zeta}_{u,j}^{\mathrm{W}} = (\xi_{u,j} + \rho \lambda_{u,j}^{\xi})  \|\mathbf{w}_{u}^H \mathbf{\widetilde{G}}_{u,j}^{\mathrm{C}}\|^2$.
By linearizing $\mathbf{w}_n$ with first-order Taylor approximation around its previous iterative point, we have
$\|\mathbf{w}_{u}^{H}\mathbf{Y}_{u,q}\|^{2}\geq2\mathfrak{Re}\left\{\bar{\mathbf{w}}_{u}^{H}\mathbf{Y}_{u,q}\mathbf{Y}_{u,q}^{H}\mathbf{w}_{u}\right\}-\|\bar{\mathbf{w}}_{u}^{H}\mathbf{Y}_{u,q}\|^{2}$.
Since $\boldsymbol{\mu}_{u,k}\geq \mathbf{0}$ and $\xi_{u,j} \geq 0$, the tight upper bound of $\{\zeta_{u,u',k}^{\mathrm{C}}\}$ and $\{\zeta_{u,j}^{\mathrm{W}}\}$ can be obtained as
\begin{equation}\label{zeta}
\small{
\widehat{\zeta}_{u,q} =
\left\{
  \begin{array}{l}
    (a_{u,q}+\rho\lambda_{u,q})
    \bigg[2\mathfrak{Re}\left\{\bar{\mathbf{w}}_{u}^{H}\mathbf{Y}_{u,q}\mathbf{Y}_{u,q}^{H}\mathbf{w}_{u}\right\} \\ ~~-\|\bar{\mathbf{w}}_{u}^{H}\mathbf{Y}_{u,q}\|^{2}\bigg], ~~~~ \hbox{$\lambda_{u,q} > 0$,} \\
    a_{u,q}\bigg[2\mathfrak{Re}\left\{\bar{\mathbf{w}}_{u}^{H}\mathbf{Y}_{u,q}\mathbf{Y}_{u,q}^{H}\mathbf{w}_{u}\right\} -\|\bar{\mathbf{w}}_{u}^{H}\mathbf{Y}_{u,q}\|^{2}\bigg]\\
    ~~+\rho\lambda_{u,q}\|\mathbf{w}_{u}^{H}\mathbf{Y}_{u,q}\|^{2},
    ~~~~ \hbox{$\lambda_{u,q}\leq 0$. }
  \end{array}
\right.}
\end{equation}
After some rearrangement, we can obtain \eqref{ApproximatedZeta}.
This ends the proof.
\end{proof}

\subsection{Joint Problem Optimization}
After transformation, \eqref{ApproximatedAL} has been reduced to a multi-convex problem over $\boldsymbol{\omega}$ and $\boldsymbol{\widehat{\omega}}$, enabling the solution to be derived by decomposing \eqref{ApproximatedAL} into multiple sequential subproblems with BCU method.
Considering the block structure, primal and dual variables are partitioned into five blocks: $\{\mathbf{X}, \boldsymbol{\gamma}\}$, $\{\mathbf{p}\}$, $\{\mathbf{F}\}$, $\{\mathbf{W}, \mathbf{D}\}$, and $ \{\boldsymbol{\mu},\boldsymbol{\xi},\mathbf{\widetilde{X}}\}$,
each of which is alternatively solved in an individual subproblem by fixing the others.
Specifically, the first subproblem solves user grouping and SINR allocation, and the second one optimizes over power control.
Furthermore, the analog beamforming is designed to approach the performance of fully digital beamforming, followed by which digital beamforming matrixes are optimized.
In the last subproblem, the introduced auxiliary and dual variables are updated in parallel.

By fixing the remaining variables, the user grouping and SINR allocation subproblem optimizing over the variable block $\{ \mathbf{X}, \boldsymbol{\gamma}\}$ can be rearranged by
\begin{equation}\label{PU}
\small{
\begin{split}
\max_{\mathbf{X},\boldsymbol{\gamma}}~
&\frac{1}{2\rho} \sum_{k\in\mathcal{K}} \sum_{u\in\mathcal{U}} \left[|\mathbf{w}_{u}^{H}\widetilde{\mathbf{h}}_{u,k}|^{4}\left(p_{k}^{\mathrm{C}}\mathbf{c}_{u}^{T}\mathbf{x}_{k}\right)^{2}
- 2p_{k}^{\mathrm{C}}\widehat{\boldsymbol{\zeta}}_{u,k}^{\mathrm{C}}\mathbf{x}_{k}\right]
\\&+\frac{1}{2\rho}\sum_{k\in\mathcal{K}}\sum_{u\in\mathcal{U}}\left(\mathbf{c}_{u}^{T}\mathbf{x}_{k}-\widetilde{x}_{u,k}+\rho\lambda_{u,k}^{\mathrm{X}}\right)^{2}
\\& +\frac{1}{2\rho}\sum_{k\in\mathcal{K}}\sum_{u\in\mathcal{U}}\left[\mathbf{c}_{u}^{T}\mathbf{x}_{k}(1-\widetilde{x}_{u,k})+\rho\widetilde{\lambda}_{u,k}^{\mathrm{X}}\right]^{2}
\\&+ \frac{1}{2\rho}\sum_{k\in\mathcal{K}}\left(\sum_{u\in\mathcal{U}}\mathbf{c}_{u}^{T}\mathbf{x}_{k}-1+\rho\lambda_{k}^{\mathrm{U}}\right)^{2}
\\&   + \! \sum_{k\in\mathcal{K}} \sum_{u\in\mathcal{U}} \bigg[ \!
    \left(V_{0}(Q_{k}+Z_{k}^{\mathrm{C}})\tau+1\right) \log_2(1+\gamma_{u,k})
 \\&  - \! \frac{1}{2\rho} \Lambda_{u,k}^2 \gamma_{u,k}^2
     - \! \frac{1}{\rho} \Lambda_{u,k}(\rho\lambda_{u,k}^{\gamma} - \!\mu_{u,u,k}) \gamma_{u,k} \bigg],
\end{split}}
\end{equation}
subject to \eqref{MinRate_P3} and \eqref{Power_P3}.
In \eqref{PU}, $\Lambda_{u,k} = \sum_{(u',k')\in \mathcal{S}_{u,k}} \mu_{u,u',k} + \sum_{j \in \mathcal{J}_{\mathrm{T}}} \xi_{u,j} + \sigma^2$ denotes the aggregate interference and noise suffered by mmWave device $k$ served by RF chain $u$.
The optimal user grouping and SINR can be obtained with the following theorem.
\begin{theorem}\label{Theorem_U}
The optimal user grouping and SINR in \eqref{PU} can be achieved by
\begin{equation}\label{Sol_U}
\small{
\begin{split}
\mathbf{x}_{k}^{*}=& \bigg[\sum_{u\in\mathcal{U}}\left(|\mathbf{w}_{u}^{H}\widetilde{\mathbf{h}}_{u,k}|^{4}\left(p_{k}^{\mathrm{C}}\right)^{2}
+\widetilde{x}_{u,k}^{2}-2\widetilde{x}_{u,k}\right)\mathbf{c}_{u}\mathbf{c}_{u}^{T}
\\& +3\mathbf{I}_{U\times U}\bigg]^{-1} \times
\sum_{u\in\mathcal{U}}\left[p_{k}^{\mathrm{C}}\left(\widehat{\boldsymbol{\zeta}}_{u,k}^{\mathrm{C}}\right)^{T} \!+ A_{u,k}^{\mathrm{X}}\mathbf{c}_{u}\right],
\end{split}}
\end{equation}
\begin{equation}\label{Sol_SINR}\small{
\gamma_{u,k}^{*}\!=\!\frac{\sqrt{(E_{u,k}\!-\!\Lambda_{u,k}^{2})^{2}+4\widetilde{Q}_{u,k}\Lambda_{u,k}^{2}}-(E_{u,k}+\Lambda_{u,k}^{2})}{2\Lambda_{u,k}^{2}},}
\end{equation}
\noindent where $A_{u,k}^{\mathrm{X}} = \left(\widetilde{x}_{u,k}-\rho\lambda_{u,k}^{\mathrm{X}}\right) + \left(1-\rho\lambda_{k}^{\mathrm{U}}\right) + \left(\widetilde{x}_{u,k}-1\right) \rho\widetilde{\lambda}_{u,k}^{\mathrm{X}}-\rho\sum_{u}\kappa_{u,k}^{\mathrm{X(1)}}R_{\min}+\rho\kappa_{k}^{\mathrm{X}(2)}P^{\max}$,
and $\kappa_{u,k}^{\mathrm{X}(1)}$ and $\kappa_{k}^{\mathrm{X}(2)}$ are the Lagrangian multipliers for constraints \eqref{MinRate_P3} and \eqref{Power_P3}, respectively.
Moreover, $\widetilde{Q}_{u,k} = \frac{\rho}{\ln2}\big[V_0\big(Q_{k}+Z_{k}^{\mathrm{C}}\big)\tau + 1 +\kappa_{u,k}^{\gamma(1)}\big]$
and $E_{u,k} = \Lambda_{u,k}\big(\rho\lambda_{u,k}^{\gamma}-\mu_{u,u,k}\big) + \rho\kappa_{u,k}^{\gamma(2)}$.
Both $\kappa_{u,k}^{\gamma(1)}$ and $\kappa_{u,k}^{\gamma(2)}$ denote the Lagrangian multipliers for constraint \eqref{MinRate_P3}, which can be obtained based on the complementarity slackness condition of the associated constraints.
\end{theorem}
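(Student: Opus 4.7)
The plan is to exploit the separability of problem \eqref{PU}: with $\mathbf{p}^{\mathrm{C}}$, $\widetilde{\mathbf{X}}$, $\mathbf{W}$, $\boldsymbol{\mu}$, $\boldsymbol{\xi}$ and the dual/penalty parameters held fixed, the first four summations depend only on $\mathbf{X}$ while the final bracketed summation contains the only dependence on $\boldsymbol{\gamma}$, and the inequality constraints \eqref{MinRate_P3}--\eqref{Power_P3} couple the two blocks only through the scalar quantities $x_{u,k}R_{\min}$ and $\sum_u x_{u,k}P^{\max}$. I would therefore treat $\{\mathbf{X}\}$ and $\{\boldsymbol{\gamma}\}$ as two independent KKT problems, derive closed-form stationary points for each, and absorb the couplings into the multipliers $\kappa_{u,k}^{\gamma(1)}$, $\kappa_{u,k}^{\gamma(2)}$, $\kappa_{u,k}^{\mathrm{X}(1)}$, $\kappa_{k}^{\mathrm{X}(2)}$ that complementary slackness pins down.

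For the SINR block, each $\gamma_{u,k}$ appears in a concave $\log_2(1+\gamma_{u,k})$ term minus a concave quadratic, so the subproblem is strictly concave and KKT is sufficient. Attaching $\kappa_{u,k}^{\gamma(1)}$ and $\kappa_{u,k}^{\gamma(2)}$ to the two inequalities in \eqref{MinRate_P3}, I would set the stationarity condition $\partial/\partial\gamma_{u,k}=0$ and clear the denominator $1+\gamma_{u,k}$. This reduces the first-order condition to a scalar quadratic $\Lambda_{u,k}^{2}\gamma_{u,k}^{2}+(E_{u,k}+\Lambda_{u,k}^{2})\gamma_{u,k}+(E_{u,k}-\widetilde{Q}_{u,k})=0$, whose discriminant can be reorganized by completing the square as $(E_{u,k}-\Lambda_{u,k}^{2})^{2}+4\widetilde{Q}_{u,k}\Lambda_{u,k}^{2}$; selecting the nonnegative root (mandatory because $\gamma_{u,k}\ge 0$) delivers exactly \eqref{Sol_SINR}.

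For the user-grouping block, the objective in $\mathbf{x}_k$ is strictly convex quadratic once the other blocks are fixed, so its Lagrangian also admits a unique stationary point. I would form the Lagrangian with multipliers $\kappa_{u,k}^{\mathrm{X}(1)}$ and $\kappa_{k}^{\mathrm{X}(2)}$ for the $\mathbf{x}_k$-dependent sides of \eqref{MinRate_P3} and \eqref{Power_P3}, expand every squared term using $\mathbf{c}_{u}^{T}\mathbf{x}_{k}=x_{u,k}$, and separate the quadratic part of the gradient from the affine part. Invoking $\sum_u \mathbf{c}_u\mathbf{c}_u^T=\mathbf{I}_{U\times U}$ to collapse sums of the form $\sum_u (\mathbf{c}_u^T\mathbf{x}_k)\mathbf{c}_u$ into $\mathbf{x}_k$, the stationarity equation $\nabla_{\mathbf{x}_k}=\mathbf{0}$ becomes a linear system $H_k\mathbf{x}_k=\mathbf{b}_k$ whose coefficient matrix matches the bracketed expression in \eqref{Sol_U} and whose right-hand side equals $\sum_{u}\bigl[p_k^{\mathrm{C}}(\widehat{\boldsymbol{\zeta}}_{u,k}^{\mathrm{C}})^{T}+A_{u,k}^{\mathrm{X}}\mathbf{c}_u\bigr]$. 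Positive definiteness of $H_k$, guaranteed by the universal $3\mathbf{I}_{U\times U}$ summand, makes the inversion in \eqref{Sol_U} well-defined.

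The main obstacle is the bookkeeping in the $\mathbf{x}_k$ subproblem: four distinct squared expressions (one in $\widehat{\mathcal L}_3$ plus three in $\mathcal L_1$) each contribute either a rank-one $\mathbf{c}_u\mathbf{c}_u^T$ piece or a full-rank $\mathbf{I}_{U\times U}$ piece to the Hessian, and they must be collated carefully to reproduce both the universal $3\mathbf{I}_{U\times U}$ block and the rank-one correction $\bigl(|\mathbf{w}_u^H\widetilde{\mathbf{h}}_{u,k}|^4(p_k^{\mathrm{C}})^2+\widetilde x_{u,k}^2-2\widetilde x_{u,k}\bigr)\mathbf{c}_u\mathbf{c}_u^T$. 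The linear contributions must likewise be regrouped into $A_{u,k}^{\mathrm{X}}$ by aggregating the shifts $\widetilde x_{u,k}-\rho\lambda_{u,k}^{\mathrm{X}}$, $1-\rho\lambda_k^{\mathrm{U}}$ and $(\widetilde x_{u,k}-1)\rho\widetilde\lambda_{u,k}^{\mathrm{X}}$ with the Lagrangian contributions $-\rho\kappa_{u,k}^{\mathrm{X}(1)}R_{\min}$ and $\rho\kappa_k^{\mathrm{X}(2)}P^{\max}$. Once this accounting is carried out, both formulas in the theorem fall out of the respective first-order optimality conditions, and the Lagrangian multipliers are determined by complementary slackness against \eqref{MinRate_P3}, \eqref{Power_P3}.
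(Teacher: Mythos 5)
Your proposal follows essentially the same route as the paper's proof: the $\mathbf{x}_k$ update is obtained from the first-order KKT conditions of the convex quadratic subproblem, and the $\gamma_{u,k}$ update comes from forming the Lagrangian with multipliers $\kappa^{\gamma(1)},\kappa^{\gamma(2)}$ attached to \eqref{MinRate_P3}, clearing the $1+\gamma_{u,k}$ denominator to get the quadratic $\Lambda_{u,k}^{2}\gamma_{u,k}^{2}+(E_{u,k}+\Lambda_{u,k}^{2})\gamma_{u,k}+E_{u,k}-\widetilde{Q}_{u,k}=0$, and selecting the nonnegative root. Your write-up actually supplies more of the Hessian bookkeeping for the $\mathbf{x}_k$ block than the paper, which dispatches that step with a one-line appeal to KKT conditions.
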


\begin{proof}
Since \eqref{PU} is a typical quadratic optimization problem with linear constraints with respect to $\mathbf{X}$, we can easily obtain \eqref{Sol_U} with the first-order optimality based on KKT conditions.
Moreover, we define the Lagrangian function respect to $\boldsymbol{\gamma}$ as
\begin{equation}
\small{
\begin{split}
\widehat{\mathcal{L}}_{\mathrm{\gamma}}=& \widehat{\mathcal{L}}(\boldsymbol{\gamma}) + \sum_{k\in\mathcal{K}}\left(\boldsymbol{\kappa}_k^{\gamma,(1)}\right)^{T}\left(\log_2(1+\boldsymbol{\gamma}_{k})-\mathbf{x}_{k}R^{\min}\right)
\\& - \left(\boldsymbol{\kappa}_k^{\gamma,(2)}\right)^{T}\left(\boldsymbol{\gamma_{k}}-\gamma_{k}^{\min}\right),
\end{split}}
\end{equation}
\noindent where $\boldsymbol{\kappa}_k^{\gamma,(1)}$ and $\boldsymbol{\kappa}_k^{\gamma,(2)}$ are the Lagrangian multiplier vectors.
By setting $\frac{\partial\widehat{\mathcal{L}}_{\mathrm{\gamma}}}{\partial\gamma_{u,k}}=0$, we have
\begin{multline}\small
\Lambda_{u,k}^{2}\gamma_{u,k}
+ \Lambda_{u,k} \left(\rho\lambda_{u,k}^{\gamma}-\mu_{u,u,k}\right)
+ \rho \kappa_{u,k}^{\gamma(2)}
\\- \frac{\rho}{\ln2}  \frac{V_0\left(Q_{k}+Z_{k}^{\mathrm{C}}\right)\tau + 1 +\kappa_{u,k}^{\gamma(1)}}{\gamma_{u,k}+1}
=0,
\end{multline}
\noindent which can be rewritten as
\begin{equation}
\small{
\Lambda_{u,k}^{2}\gamma_{u,k}^{2}+(E_{u,k}+\Lambda_{u,k}^{2})\gamma_{u,k}+E_{u,k}-\widetilde{Q}_{u,k} = 0.}
\end{equation}
Considering $\gamma_{u,k} \geq 0$, SINR can be obtained by \eqref{Sol_SINR}, which ends the proof.
\end{proof}


\begin{lemma}\label{SINRAlloc}
From \eqref{Sol_SINR}, we can easily derive that a NR-U user $k$, $\forall k \in \mathcal{K}$, is grouped into beam $u$ with $\gamma_{u,k} > 0$ when
\begin{equation}\small
V_0 \! \left(Q_{k}+Z_{k}^{\mathrm{C}}\right)\tau  +\kappa_{u,k}^{\boldsymbol{\gamma},(1)} + 1 \!>\!
\left[\left(\lambda_{u,k}^{\gamma}\!-\!\frac{\mu_{u,u,k}}{\rho}\right)\Lambda_{u,k} \!+\! \kappa_{u,k}^{\boldsymbol{\gamma},(2)}\right] \! \ln 2.
\end{equation}
\end{lemma}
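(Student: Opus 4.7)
The plan is to observe that \eqref{Sol_SINR} is simply the ``$+$'' root of the quadratic $\Lambda_{u,k}^{2}\gamma^{2}+(E_{u,k}+\Lambda_{u,k}^{2})\gamma+(E_{u,k}-\widetilde{Q}_{u,k})=0$ already derived in the proof of Theorem~\ref{Theorem_U}, and to read off the condition under which this root is strictly positive. Since a NR-U user $k$ is effectively placed in beam $u$ exactly when the associated SINR is positive, this will yield the claimed inequality.

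First I would require $\gamma_{u,k}^{*}>0$ in \eqref{Sol_SINR}, which is equivalent to
\begin{equation*}
\sqrt{(E_{u,k}-\Lambda_{u,k}^{2})^{2}+4\widetilde{Q}_{u,k}\Lambda_{u,k}^{2}}>E_{u,k}+\Lambda_{u,k}^{2}.
\end{equation*}
I would then split on the sign of the right-hand side. If $E_{u,k}+\Lambda_{u,k}^{2}\le 0$, the inequality is automatic since the left side is nonnegative. If $E_{u,k}+\Lambda_{u,k}^{2}>0$, squaring both sides and using the identity $(E-\Lambda^{2})^{2}-(E+\Lambda^{2})^{2}=-4E\Lambda^{2}$ collapses the condition to $\widetilde{Q}_{u,k}>E_{u,k}$. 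Moreover, because $V_{0},\tau,Q_{k},Z_{k}^{\mathrm{C}},\kappa_{u,k}^{\gamma(1)}\ge 0$ and $\rho>0$, the definition of $\widetilde{Q}_{u,k}$ forces $\widetilde{Q}_{u,k}\ge \rho/\ln 2>0$, so in the first case we also have $\widetilde{Q}_{u,k}>0\ge -\Lambda_{u,k}^{2}\ge E_{u,k}$. Hence, in both cases, $\gamma_{u,k}^{*}>0$ reduces to the single algebraic inequality $\widetilde{Q}_{u,k}>E_{u,k}$.

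Finally, substituting the explicit expressions $\widetilde{Q}_{u,k}=\frac{\rho}{\ln 2}\bigl[V_{0}(Q_{k}+Z_{k}^{\mathrm{C}})\tau+1+\kappa_{u,k}^{\gamma(1)}\bigr]$ and $E_{u,k}=\Lambda_{u,k}(\rho\lambda_{u,k}^{\gamma}-\mu_{u,u,k})+\rho\kappa_{u,k}^{\gamma(2)}$ from Theorem~\ref{Theorem_U} into $\widetilde{Q}_{u,k}>E_{u,k}$, and multiplying both sides by $\ln 2/\rho$ (valid since $\rho>0$), recovers exactly the inequality stated in the lemma. The only mildly delicate step is the unified treatment of the two sign cases when squaring; beyond that, the derivation is entirely algebraic and requires no further optimization-theoretic input.
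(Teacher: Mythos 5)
Your proposal is correct and follows exactly the route the paper intends: the paper offers no explicit proof (it only asserts the condition follows "easily" from \eqref{Sol_SINR}), and your algebra — reducing $\gamma_{u,k}^{*}>0$ to $\widetilde{Q}_{u,k}>E_{u,k}$ via the two sign cases and then substituting the definitions of $\widetilde{Q}_{u,k}$ and $E_{u,k}$ — correctly fills in that omitted derivation. The case split when discarding the square root is handled properly, since $\Lambda_{u,k}\geq\sigma^{2}>0$ and the nonnegativity of the Lagrange multipliers make $\widetilde{Q}_{u,k}>0$ automatic.
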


With fixed $\mathbf{X}$ and $\mathbf{W}$, the power control problem can be rewritten as
\begin{equation}\label{PP}
\small{
\begin{split}
\min_{\mathbf{p}{^\mathrm{C}}}~ & \frac{1}{2\rho}\sum_{k\in\mathcal{K}}\sum_{u\in\mathcal{U}}\sum_{u'\in\mathcal{U}}
    \bigg[x_{u,k}^2|\mathbf{w}_{u'}^{H}\widetilde{\mathbf{h}}_{u',k}|^{4}(p_{k}^{\mathrm{C}})^{2}
    -\!2x_{u,k}\widehat{\zeta}_{u',u,k}^{\mathrm{C}}p_{k}^{\mathrm{C}}\bigg]
\\&  + Z^{\mathrm{W}} \sum_{k\in\mathcal{K}} \sum_{j\in\mathcal{J}_{\mathrm{R}}}
    \|\mathbf{V}_{j}^{H}\mathbf{g}_{j,k}^{\mathrm{W}}\|^{2} p_{k}^{\mathrm{C}},
\end{split}}
\end{equation}
subject to \eqref{WiGigInterferenceMitigation_P3} and \eqref{Power_P3}.

After analyzing, we have the closed-form power control expression as
\begin{equation}\label{Sol_p}
\small{
p_{k}^{\mathrm{C}} =
\left\{
  \begin{array}{l}
    \frac
    {\rho\left[\kappa_{k}^{\mathrm{p},(2)}-\kappa_{k}^{\mathrm{p},(3)}
    -\sum_{j}(Z^{\mathrm{W}}+\kappa_{j}^{\mathrm{p},(1)})\|\mathbf{V}_{j}^{H}\mathbf{g}_{j,k}^{\mathrm{W}}\|^{2}\right]}
    {\sum_{u\in\mathcal{U}}x_{u,k}^{2}\left(\sum_{u'\in\mathcal{U}}|\mathbf{w}_{u'}^{H}\widetilde{\mathbf{h}}_{u',k}|^{4}\right)}\\
    ~~+\frac
    {\sum_{u\in\mathcal{U}}\sum_{u'\in\mathcal{U}}x_{u,k}\widehat{\zeta}_{u',u,k}^{\mathrm{C}}}
    {\sum_{u\in\mathcal{U}}x_{u,k}^{2}\left(\sum_{u'\in\mathcal{U}}|\mathbf{w}_{u'}^{H}\widetilde{\mathbf{h}}_{u',k}|^{4}\right)},
     \hbox{if $\sum_{u=1}^{U}x_{u,k} > 0$,} \\
     0,
    ~~ \hbox{otherwise,}
  \end{array}
\right.}
\end{equation}
where $\kappa_{k}^{\mathrm{p},(1)}$, $\kappa_{k}^{\mathrm{p},(2)}$ and $\kappa_{k}^{\mathrm{p},(3)}$ are the nonnegative Lagrangian multipliers associated with constraints \eqref{WiGigInterferenceMitigation_P3} and \eqref{ContinuousVar_P3}.

\begin{algorithm}[h]
\caption{Analog Beamforming Optimization Algorithm for Problem \eqref{PF} Based on BCU}
\label{Alg_ABF}
\begin{algorithmic}[1]
\Require{
    $\mathbf{C}_{i}=\sum_{u\in\mathcal{N}_{i}^{\mathrm{C}}}\left(\left(\mathbf{w}_{u}+\rho\boldsymbol{\lambda}_{u}^{\mathrm{W}}\right)\mathbf{d}_{u}^{H}\right)$,
     $\mathbf{\widetilde{D}}_{i}=\sum_{u\in\mathcal{N}_{i}^{\mathrm{C}}}\mathbf{d}_{u}\mathbf{d}_{u}^{H}$.}
\State Initialize $l_{\mathrm{ABF}} = 0$, $\mathbf{F}_{i}$, and $\mathbf{U}_{i} = \mathbf{F}_{i}\sum_{u\in\mathcal{N}_{i}^{\mathrm{C}}}\mathbf{d}_{u}\mathbf{d}_{u}^{H}$.
\Repeat
    \For {$i = 1,\cdots,I$, $m = 1,\cdots,M_{0}$, $n = 1,\cdots,N_{0}$}
    \State Set $\small{b: = \mathbf{F}_{i}(m,n) \widetilde{\mathbf{D}}_{i}(n,n) - \mathbf{U}_{i}(m,n) + \mathbf{C}_{i}(m,n)}$.
    \State Update $\mathbf{U}_{i}: = \mathbf{U}_{i} + \left(\frac{b}{|b|}-\mathbf{F}_{i}(m,n)\right) \mathbf{I}_{M_0\times M_0}(:,m) \mathbf{\widetilde{D}}_i(n,:)$.
    \State Update $\mathbf{F}_{i}(m,n): = \frac{b}{|b|}$.
    \EndFor
    \State Update $l_{\mathrm{ABF}}: = l_{\mathrm{ABF}} + 1$.
\Until{$l_{\mathrm{ABF}} \geq N_{\mathrm{ABF}}^{\max}$ or the difference between successive values  satisfies the termination criterion.}
\Ensure{Analog beamforming matrix $\mathbf{F} = [\mathbf{F}_{1},\mathbf{F}_{2},...,\mathbf{F}_{I}]$.}
\end{algorithmic}
\end{algorithm}

Furthermore, the analog beamforming subproblem can be rewritten as
\begin{equation*}\tag{\ref{PF}}
\small{
\begin{split}
\min_{\mathbf{F}_i}~ 
&\mathrm{Tr}\bigg(\mathbf{F}_i^{H}\mathbf{F}_i\sum_{u\in\mathcal{N}_{i}^{\mathrm{C}}}\mathbf{d}_{u}\mathbf{d}_{u}^{H}\bigg)
\\&-2\mathfrak{Re}\bigg\{ \mathrm{Tr}\bigg[\mathbf{F}_i^{H}\sum_{u\in\mathcal{N}_{i}^{\mathrm{C}}}\bigg(\bigg(\mathbf{w}_{u}+\rho\boldsymbol{\lambda}_{u}^{\mathrm{W}}\bigg)\mathbf{d}_{u}^{H}\bigg)\bigg]\bigg\},
\end{split}}
\end{equation*}
subject to
\begin{subequations}\label{PF}
\begin{equation}\label{AnalogBF_PF}\small
|\mathbf{F}_i(m,n)=1|,~ \forall m \in \{1,2,...,M_{0}\}, n \in \{1,2,...,N_{0}\}.
\end{equation}
\end{subequations}
Since the objective function \eqref{PF} is quadratic, leveraging the separability of the unit-modulus constraint \eqref{AnalogBF_PF}, we can apply BCU algorithm to recursively update the elements of analog beamforming matrixes.
Specifically, during each step, only one element of $\mathbf{F}_{i}$ is optimized while fixing the others.
The derivation is similar to \cite{Shi2018Spectral}, and the analog beamforming optimization algorithm is presented in Algorithm \ref{Alg_ABF}.


By fixing the other variables, the digital beamforming subproblem can be recast into
\begin{equation}\label{PD}
\small{
\begin{split}
\min_{\mathbf{W},\mathbf{D}}~&
\frac{1}{2\rho} \sum_{u \in \mathcal{U}} \left\|\mathbf{w}_{u}-\mathbf{F}_{I(u)}\mathbf{d}_{u}+\rho\boldsymbol{\lambda}^{\mathrm{W}}\right\|^2
\\&+\frac{1}{2\rho} \sum_{k\in\mathcal{K}}  \sum_{u\in\mathcal{U}} \bigg[ \|\mathbf{x}_{k}\|^2 \left(p_{k}^{\mathrm{C}}\right)^{2} \left|\mathbf{w}_{u}^H \widetilde{\mathbf{h}}_{u,k}\right|^4
    - 2  \boldsymbol{\widehat{\zeta}}_{u,k}^{\mathrm{C}}  \mathbf{x}_{k} p_{k}^{\mathrm{C}}\bigg]
\\& + \frac{1}{2\rho} \sum_{u\in\mathcal{U}} \sum_{j \in \mathcal{J}_{\mathrm{T}}}
    \bigg[\|\mathbf{w}_{u}^H \mathbf{\widetilde{G}}_{u,j}^{\mathrm{C}}\|^4 \left(p_{j}^{\mathrm{W}}\right)^2
    - 2 p_{j}^{\mathrm{W}} \widehat{\zeta}_{u,j}^{\mathrm{W}}
    \bigg].
\end{split}}
\end{equation}
Moreover, $\boldsymbol{\mu}$ and $\boldsymbol{\xi}$ can be updated by
\begin{equation}\label{PSignal}
\small{
\begin{split}
\min_{\boldsymbol{\mu},\boldsymbol{\xi}}~&
\frac{1}{2\rho}\sum_{k\in\mathcal{K}} \sum_{u\in\mathcal{U}}
    \left[ \|\boldsymbol{\mu}_{u,k}\|^{2} + 2 \mathbf{S}_{u,k}^{\mathrm{C}} \boldsymbol{\mu}_{u,k}^T \right]
\\&+ \frac{1}{2\rho} \sum_{u\in\mathcal{U}} \sum_{j \in \mathcal{J}_{\mathrm{T}}}
   \left[ \xi_{u,j}^{2} + 2S_{u,j}^{\mathrm{W}}\xi_{u,j} \right]
\\&
+ \frac{1}{2\rho} \sum_{k\in\mathcal{K}} \sum_{u\in\mathcal{U}}
    \bigg[ \gamma_{u,k} \bigg( \sum_{(u',k') \in \mathcal{S}_{u,k}} \mu_{u,u',k'}
    + \sum_{j \in \mathcal{J}_{\mathrm{T}}} \xi_{u,j}  + {\sigma}^2 \bigg)
 \\&   - \mu_{u,u,k}
    + \rho\lambda_{u,k}^{\gamma} \bigg]^2,
\end{split}}
\end{equation}
where $\mathbf{S}_{u,k}^{\mathrm{C}} \!= \rho\boldsymbol{\lambda}_{u,k}^{\mu} \!+ \mathbf{x}_{k}^{T} p_{k}^{\mathrm{C}} \!\big[ |\bar{\mathbf{w}}_{u}^{H} \widetilde{\mathbf{h}}_{u,k}|^{2} \!- 2 \mathfrak{Re} \left\{\bar{\mathbf{w}}_{u}^{H} \widetilde{\mathbf{h}}_{u,k} \widetilde{\mathbf{h}}_{u,k}^{H} \mathbf{w}_{u}\right\}\!\big] $,
and $S_{u,j}^{\mathrm{W}} \!= \rho\lambda_{u,j}^{\xi} + p_{j}^{\mathrm{W}} \!\big[ \!\|\bar{\mathbf{w}}_{u}^{H}  \mathbf{\widetilde{G}}_{u,j}^{\mathrm{C}} \|^{2}   2\mathfrak{Re}\!\left\{\bar{\mathbf{w}}_{u}^{H} \mathbf{\widetilde{G}}_{u,j}^{\mathrm{C}} \big(\mathbf{\widetilde{G}}_{u,j}^{\mathrm{C}}\big)^{H} \mathbf{w}_{u}\right\} \big]$.
Both \eqref{PD} and \eqref{PSignal} are standard convex problems, which can be solved by conventional convex tools, e.g., the interior point method.
In addition, the dual variable $\mathbf{\widetilde{X}}$ is updated by solving the following quadratic programming under fixed $\mathbf{X}$ as
\begin{equation*}
\small{
\begin{split}
&\min_{\mathbf{\widetilde{X}}}
\frac{1}{2\rho}\sum_{u\in\mathcal{U}}\sum_{k\in\mathcal{K}}\left(1+x_{u,k}^{2}\right)\widetilde{x}_{u,k}^{2}
\\& -\sum_{u\in\mathcal{U}}\sum_{k\in\mathcal{K}}
\left[\frac{1}{\rho}x_{u,k}\left(x_{u,k}+1\right)+\lambda_{u,k}^{\mathrm{X}}+\widetilde{\lambda}_{u,k}^{\mathrm{X}}x_{u,k}\right]\widetilde{x}_{u,k},
\end{split}}\tag{\ref{PDualU}}
\end{equation*}
subject to
\begin{subequations}\label{PDualU}
\begin{equation}\label{EquvX2_PDualU}\small
\mathbf{0} \leq \mathbf{\widetilde{x}}_{k} \leq \mathbf{1}, ~\forall k \in \mathcal{K}.
\end{equation}
\end{subequations}

By taking the first-order optimality condition, we can obtain the unconstrained solution as
\begin{equation}\label{Sol_Upsilon}\small
\widetilde{x}_{u,k}^{\mathrm{o}}=
\frac{2\left[x_{u,k}\left(x_{u,k}+1\right)+\rho\left(\lambda_{u,k}^{\mathrm{X}}+\widetilde{\lambda}_{u,k}^{\mathrm{X}}x_{u,k}\right)\right]}
{1+x_{u,k}^{2}}.
\end{equation}

Considering \eqref{EquvX2_PDualU}, we have
\begin{align}\label{Sol_Upsilon2}
\small{
\widetilde{x}_{u,k}^{*}=
\left\lbrace
\begin{array}{ll}
1,& 1\leq\widetilde{x}_{u,k}^{\mathrm{o}},\\
\widetilde{x}_{u,k}^{\mathrm{o}},& 0 < \widetilde{x}_{u,k}^{\mathrm{o}}<1,\\
0,& \widetilde{x}_{u,k}^{\mathrm{o}}\leq 0.
\end{array}
\right.}
\end{align}

\begin{spacing}{1.1}
\begin{algorithm}[h]
\caption{Joint User Grouping, Beam Coordination and Power Control Strategy Based on \emph{PDD-CCCP}}
\label{Alg_PDD}
\begin{algorithmic}[1]
\State Define accuracy tolerance for BCU and PDD as $\epsilon_1$ and $\epsilon_2$, and define the maximum iteration number of BCU as $N_{\mathrm{BCU}}^{\max}$.
Initialize $\boldsymbol{\omega}^0,\boldsymbol{\widehat{\omega}}^0$ with a feasible point, the dual variables $\boldsymbol{\lambda}^{(0)}$, $\rho^{(0)}>0$, $0<\eta<1$, and $\delta^{(0)}>0$.
Set the iteration number of outer loop as $l=0$.
\Repeat
    \State Initialize the inner loop iteration number $n = 0$.
    \Repeat 
        \State Update user grouping and SINR allocation $\{\mathbf{X},\boldsymbol{\gamma}\}$ by solving \eqref{PU} with fixed $\{{\mathbf{D},\mathbf{p},\mathbf{\widetilde{x}},\boldsymbol{\mu},\boldsymbol{\xi}}\}$.
        \State Update power control $\mathbf{p}^{\mathrm{C}}$ by solving \eqref{PP} with fixed $\{\mathbf{X},\boldsymbol{\mu},\boldsymbol{\xi}\}$.
        \State Update analog beamforming $\mathbf{F}$ by solving \eqref{PF} with Algorithm \ref{Alg_ABF}.
        \State Update fully digital beamforming $\mathbf{W}$ and digital beamforming $\mathbf{D}$ by solving \eqref{PD}.
        \State Update $\boldsymbol{\mu}$, and $\boldsymbol{\xi}$ and $\mathbf{\widetilde{X}}$ by solving subproblem \eqref{PSignal} and \eqref{PDualU} in parallel.
        \State Update the inner loop iteration number: $n = n+1$.
    \Until{$n \geq N_{\mathrm{BCU}}^{\max}$ or  $|\hat{\mathcal{L}}^{(n+1)}(\boldsymbol{\omega},\boldsymbol{\widehat{\omega}})-\hat{\mathcal{L}}^{(n)}(\boldsymbol{\omega},\boldsymbol{\widehat{\omega}})| \leq \epsilon_1$.}
\State Calculate the constraint violation $h(\boldsymbol{\omega}^{(l)},\boldsymbol{\widehat{\omega}}^{(l)})$ according to \eqref{Violation}.
\If{$h(\boldsymbol{\omega}^{(l)},\boldsymbol{\widehat{\omega}}^{(l)}) \leq \delta^{(l)}$}
\State Update dual variables according to \eqref{DualUpdateX} - \eqref{DualUpdateSINR}. Set $\rho^{(l+1)} = \rho^{(l)}$.
\Else
\State Set $\boldsymbol{\lambda}^{(l+1)} = \boldsymbol{\lambda}^{(l)}$ and update penalty parameter $\rho^{(l+1)} = \eta\rho^{(l)}$.
\EndIf
\State Set $\delta^{(l+1)} = 0.9 h(\boldsymbol{\omega}^{(l)},\boldsymbol{\widehat{\omega}}^{(l)})$.
\State Update the outer loop iteration number: $l = l+1$.
\Until{$h(\boldsymbol{\omega}^{i},\boldsymbol{\widehat{\omega}}^{i})) < \epsilon_2$.}
\end{algorithmic}
\end{algorithm}
\end{spacing}

In the outer loop, the Lagrangian multipliers can be updated in the $l$-th iteration by
\begin{equation}\label{DualUpdateX}\small
\boldsymbol{\lambda}_{k}^{\mathrm{X},(l+1)}= \boldsymbol{\lambda}_{k}^{\mathrm{X},(l)} + \frac{1}{\rho} \left(\mathbf{x}_k- \mathbf{\widetilde{x}}_k\right),
\end{equation}
\begin{equation}\label{DualUpdateX2}\small
\widetilde{\lambda}_{u,k}^{\mathrm{X},(l+1)}= \widetilde{\lambda}_{u,k}^{\mathrm{X},(l+1)} + \frac{1}{\rho} x_{u,k}\left(1-\widetilde{x}_{u,k}\right),
\end{equation}
\begin{equation}\label{DualUpdateX3}\small
\widetilde{\lambda}_{k}^{\mathrm{U},(l+1)}= \widetilde{\lambda}_{k}^{\mathrm{U},(l+1)} + \frac{1}{\rho} \left(\sum_{u\in\mathcal{U}}\mathbf{c}_{u}^{T}\mathbf{x}_{k}-1\right),
\end{equation}
\begin{equation}\label{DualUpdateW}\small
\boldsymbol{\lambda}_{u}^{\mathrm{W},(l+1)}= \boldsymbol{\lambda}_{u}^{\mathrm{W},(l)} + \frac{1}{\rho} \left( \mathbf{w}_{u} - \mathbf{F}_{I(u)}\mathbf{d}_{u}\right),
\end{equation}
\begin{equation}\label{DualUpdateMu}\small
\boldsymbol{\lambda}_{u,k}^{\mu,(l+1)} = \boldsymbol{\lambda}_{u,k}^{\mu,(l)} + \frac{1}{\rho} \left(\boldsymbol{\mu}_{u,k} - \mathbf{x}_{k}^{T} |\mathbf{w}_{u}^{H} \widetilde{\mathbf{h}}_{u,k}|^2 p_{k}^{\mathrm{C}}\right),
\end{equation}
\begin{equation}\label{DualUpdateXi}\small
\lambda_{u,j}^{\xi,(l+1)} = \lambda_{u,j}^{\xi,(l)} + \frac{1}{\rho}\left(\xi_{u,j} - \|\mathbf{w}_{u}^{H} \widetilde{\mathbf{G}}_{u,j}^{\mathrm{C}}\|^2\right) ,
\end{equation}
\begin{equation}\label{DualUpdateSINR}\small
\lambda_{u,k}^{\gamma,(l+1)} = \lambda_{u,k}^{\gamma,(l)}+
\frac{1}{\rho}\bigg(\gamma_{u,k}\Lambda_{u,k} -\mu_{u,u,k}\bigg),
\end{equation}
\noindent where $\rho$ is the penalty parameter at current iteration, and $\lambda^{(l)}$ denotes the $l$-th iteration point of $\lambda$.

The detailed iterative algorithm is presented in Algorithm \ref{Alg_PDD}, where the constraint violation is defined as
\begin{equation}\label{Violation}
\small{
\begin{split}
 & h(\boldsymbol{\omega},\boldsymbol{\widehat{\omega}})
=  \max_{\forall u \in \mathcal{U}, \forall k \in \mathcal{K}}
\bigg\{
\left\|\mathbf{x}_{k} - \mathbf{\widetilde{x}}_{k}\right\|_{\infty},
|\mathbf{c}_{u}^{T}\mathbf{x}_{k}(1 - \widetilde{x}_{u,k})|,
 \\& \left|\mathbf{1}_{1\times U}\mathbf{x}_{k} \!- \! 1\right|,
\| \mathbf{w}_{u} \!-\! \mathbf{F}_{I(u)}\mathbf{d}_{u} \|,
\left\| \boldsymbol{\mu}_{u,k} \!-\! \mathbf{x}_{k}^{T} |\mathbf{w}_{u}^{H} \widetilde{\mathbf{h}}_{u,k}|^2 p_{k}^{\mathrm{C}} \right\|_{\infty},
 \\& \left| \xi_{u,j} - \|\mathbf{w}_{u}^{H} \widetilde{\mathbf{G}}_{u,j}^{\mathrm{C}}\|^2 \right|,
\left| \gamma_{u,k} \Lambda_{u,k} -\mu_{u,u,k} \right|
 \bigg\}.
\end{split}}
\end{equation}
Based on the convergence analysis of CCCP \cite{Lanckriet2009CCCP} and PDD \cite{Shi2017Penalty}, the proposed \emph{PDD-CCCP} algorithm can converge to a stationary solution of problem $\mathcal{P}3$.
Here, the computational complexity to solve $\mathbf{X}$ and $\boldsymbol{\gamma}$ is $O\left( K\left(U^3+UN_0+U^2\right) + U^2K^2\right)$.
Moreover, the computational for optimizing power control $\mathbf{p}^{\mathrm{C}}$ in (42) is $O\left(K(U^2+J_{\mathrm{R}})\right)$.
All the analog beamforming matrixes $\mathbf{F}_i, i \in \mathcal{I},$ are obtained with Algorithm 1, and the time complexity is $O\left(U^2M_0^2\right)$.
Additionally, the complexity for solving all the digital beamforming matrixes is $O\big(UN_{0}\big(M_{0}N_{0}+KU+\sum_{j\in\mathcal{J}_{\mathrm{T}}}M_{j}\big)\big)$, and the complexity for optimizing $\boldsymbol{\mu}$ and $\boldsymbol{\xi}$ is $O\left(U^2K(UK+M_0N_0+J_\mathrm{T})\right)$.
Based on the above analyses, the complexity of the developed \textit{PDD-CCCP} algorithm can be written as $O\big(U^{3}K^{2}+U^{2}M_{0}^{2}+U^{2}KM_{0}N_{0}+UM_{0}N_{0}^{2}+UN_{0}\sum_{j\in\mathcal{J}_{\mathrm{T}}}M_{j}\big)$.

\subsection{Performance Analyses}
The proposed algorithm based on Lyapunov optimization and \emph{PDD-CCCP} methods can achieve bounded NR-U traffic delay, limited long-term average WiGig interference, and asymptotically optimal NR-U spectral efficiency characterized by the following lemmas and theorems.
\begin{lemma}
If the virtual queue vectors $\mathbf{Z}(t)$ are upper bounded, the constraints \eqref{AvgWiGigInterference} and \eqref{Delay} can be satisfied.
\end{lemma}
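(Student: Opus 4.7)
The plan is to carry out the standard mean-rate-stability argument associated with Lyapunov virtual queues, exploiting the recursions \eqref{ZWiGig} and \eqref{ZNR}. The key observation is that the projection $[\cdot]^+$ in the virtual-queue update only makes the queue larger than its unprojected counterpart, so dropping it yields a clean inequality that telescopes over time.

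First, I would handle the WiGig constraint \eqref{AvgWiGigInterference}. From \eqref{ZWiGig}, since $[a]^+ \geq a$ for any real $a$, we have $Z^{\mathrm{W}}(t+1) \geq Z^{\mathrm{W}}(t) - \bar{I}^{\mathrm{W}} + I^{\mathrm{W}}(t)$. Summing this over $t = 0, 1, \ldots, T-1$ and using $Z^{\mathrm{W}}(0)=0$, a telescoping cancellation leaves
\begin{equation*}
Z^{\mathrm{W}}(T) \;\geq\; \sum_{t=0}^{T-1}\bigl(I^{\mathrm{W}}(t) - \bar{I}^{\mathrm{W}}\bigr).
\end{equation*}
Rearranging, dividing by $T$, and taking expectations yields $\frac{1}{T}\sum_{t=0}^{T-1}\mathbb{E}[I^{\mathrm{W}}(t)] \leq \bar{I}^{\mathrm{W}} + \mathbb{E}[Z^{\mathrm{W}}(T)]/T$. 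By hypothesis $Z^{\mathrm{W}}(T)$ is uniformly bounded, so the last term tends to $0$ as $T \to \infty$, which is exactly \eqref{AvgWiGigInterference}.

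Next, I would repeat the same three moves for each NR-U user's delay virtual queue $Z_k^{\mathrm{C}}(t)$. The recursion \eqref{ZNR} gives $Z_k^{\mathrm{C}}(t+1) \geq Z_k^{\mathrm{C}}(t) - \bar{Q}_k + Q_k(t+1)$; telescoping from $0$ to $T-1$ with $Z_k^{\mathrm{C}}(0)=0$ produces $Z_k^{\mathrm{C}}(T) \geq \sum_{t=0}^{T-1}\bigl(Q_k(t+1) - \bar{Q}_k\bigr)$. Taking expectations, dividing by $T$, and invoking the assumed uniform bound on $Z_k^{\mathrm{C}}$ lets the boundary term vanish in the limit, delivering \eqref{Delay} (up to a harmless index shift in the time average that disappears as $T\to\infty$).

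The proof is essentially a clean bookkeeping exercise; there is no deep obstacle. The one subtlety worth stating carefully is the justification for dropping the $[\cdot]^+$ operator in the recursion, since that is what makes the telescoping argument tight in the correct direction. The boundedness hypothesis on $\mathbf{Z}(t)$ is used only to push the residual term $\mathbb{E}[Z(T)]/T$ to zero in the Cesàro limit, and no other property of the control policy is needed for this lemma.
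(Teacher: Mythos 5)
Your proof is correct and is precisely the standard mean-rate-stability telescoping argument that the paper delegates to \emph{Lemma 2} of its cited reference rather than writing out. The one detail you flag as an index shift is indeed harmless: boundedness of $Z_k^{\mathrm{C}}(t)$ together with the unprojected recursion forces $Q_k(t)$ itself to be bounded, so the boundary term $Q_k(T)/T$ vanishes and the shifted average coincides with \eqref{Delay} in the limit.
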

\begin{proof}
Refer to \emph{Lemma 2} in \cite{Qiu2018Lyapunov}.
\end{proof}

\begin{theorem}\label{Theorem_QueueUpperBound}
The WiGig interference constraints \eqref{AvgWiGigInterference} and NR delay constraints \eqref{Delay} in $\mathcal{P}1$ can be guaranteed by Algorithm \ref{Alg_PDD}.
\end{theorem}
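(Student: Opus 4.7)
The plan is to invoke the preceding lemma: once the virtual queues $Z^{\mathrm{W}}(t)$ and $\{Z_{k}^{\mathrm{C}}(t)\}_{k\in\mathcal{K}}$ generated by Algorithm~\ref{Alg_PDD} are shown to be uniformly bounded in $t$, the constraints \eqref{AvgWiGigInterference} and \eqref{Delay} follow immediately. The proof therefore reduces to a pair of drift-threshold arguments, one per class of virtual queue, supported by the standing boundedness facts $I^{\mathrm{W}}(t)\leq|\mathcal{J}_{\mathrm{R}}|\gamma_{\mathrm{W}}$, $A_{k}(t)\leq A_{k}^{\max}$, and $R_{u,k}(t)\leq Q_{k}(t)/\tau(t)$ inherited from \eqref{WiGigInterferenceMitigation_P3}, the arrival assumption, and \eqref{MinRate_P3}.

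For $Z^{\mathrm{W}}(t)$, I would compare the algorithm's action against the trivially feasible fallback $\mathbf{p}^{\mathrm{C}}(t)=\mathbf{0}$, which yields $I^{\mathrm{W}}(t)=0$ and returns a value of $\mathcal{P}2$ bounded below by a constant depending only on $V_{0}$, the current queue state, and the system parameters. Any action with $I^{\mathrm{W}}(t)>0$ pays the penalty $V_{1}Z^{\mathrm{W}}(t)I^{\mathrm{W}}(t)$, so there exists a finite threshold $Z^{\mathrm{W},\star}$ above which the penalty exceeds every possible gain from positive interference, compelling the algorithm to select $I^{\mathrm{W}}(t)=0$ and thus $Z^{\mathrm{W}}(t+1)\leq Z^{\mathrm{W}}(t)-\bar{I}^{\mathrm{W}}$ via \eqref{ZWiGig}. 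Combined with the one-step growth bound $Z^{\mathrm{W}}(t+1)\leq Z^{\mathrm{W}}(t)+|\mathcal{J}_{\mathrm{R}}|\gamma_{\mathrm{W}}$, this yields the uniform bound $Z^{\mathrm{W}}(t)\leq Z^{\mathrm{W},\star}+|\mathcal{J}_{\mathrm{R}}|\gamma_{\mathrm{W}}$.

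A parallel argument controls each $Z_{k}^{\mathrm{C}}(t)$: the coefficient $V_{0}(Q_{k}(t)+Z_{k}^{\mathrm{C}}(t))\tau$ of $R_{k}(t)$ in the $\mathcal{P}2$ objective is strictly increasing in $Z_{k}^{\mathrm{C}}(t)$, so above a computable threshold the algorithm is compelled to saturate the backlog cap $R_{k}(t)\tau(t)=Q_{k}(t)$ from \eqref{MinRate_P3}. Feeding this into \eqref{Queue} gives $Q_{k}(t+1)=A_{k}(t)\leq A_{k}^{\max}$, whereupon \eqref{ZNR} delivers a finite upper bound $Z_{k}^{\mathrm{C}}(t)\leq Z_{k}^{\mathrm{C},\star}$.

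The main obstacle is that Algorithm~\ref{Alg_PDD} only returns a stationary point of the nonconvex AL problem rather than a global maximizer of $\mathcal{P}2$, so the threshold steps cannot be executed through a direct optimality comparison. I would discharge them by inspecting the closed-form KKT updates directly: the expression \eqref{Sol_p} shows that the $-V_{1}Z^{\mathrm{W}}(t)\|\mathbf{V}_{j}^{H}\mathbf{g}_{j,k}^{\mathrm{W}}\|^{2}$ contribution to its numerator eventually drives $p_{k}^{\mathrm{C}}$ to its lower bound of zero, and the SINR update \eqref{Sol_SINR} together with \emph{Lemma~\ref{SINRAlloc}} exhibits the analogous rate-saturation phenomenon for $R_{k}(t)$ as $Z_{k}^{\mathrm{C}}(t)$ grows. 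The PDD convergence guarantees of \cite{Shi2017Penalty} ensure these KKT-level estimates carry through to the iterates actually produced by Algorithm~\ref{Alg_PDD}, completing the bound on the virtual queues and, via the preceding lemma, the proof.
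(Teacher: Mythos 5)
Your overall reduction is the same as the paper's: show the virtual queues $Z^{\mathrm{W}}(t)$ and $Z_{k}^{\mathrm{C}}(t)$ generated by Algorithm~\ref{Alg_PDD} are uniformly bounded and then invoke the preceding lemma. You are also right to flag that the algorithm only returns a stationary point of a nonconvex problem, so a Neely-style ``compare against a fallback action'' optimality argument cannot be applied verbatim; your proposed repair---inspect the closed-form first-order updates---is in fact what the paper does. Its Appendix~B works with the Lagrangian of the SINR subproblem \eqref{PU}, splits the objective into a term $f^{\mathrm{r}}$ whose coefficient is $(Q_{k}+Z_{k}^{\mathrm{C}})$ and a remainder $f^{1}$, and uses concavity (the derivative at $\boldsymbol{\gamma}^{*}\ge 0$ cannot exceed the derivative at $\mathbf{0}$) to extract the explicit bound $Z_{k}^{\mathrm{C}}(t)\le\Omega_{k}-Q_{k}(t)$ in \eqref{Z1Bound}; the $Z^{\mathrm{W}}$ bound is claimed analogously from the power subproblem \eqref{PP} and \eqref{Sol_p}. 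So the destination is the same, but the paper never passes through your intermediate ``threshold forces a specific action'' claims---and those claims are where your draft has concrete problems.

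First, the fallback $\mathbf{p}^{\mathrm{C}}(t)=\mathbf{0}$ is not feasible for $\mathcal{P}2$: constraint \eqref{ConnPerDevice} forces $\sum_{u}x_{u,k}=1$ for every $k$, and then \eqref{MinRate} requires $R_{u,k}(t)\ge R_{\min}>0$ for the selected beam, which is impossible at zero power. Hence the algorithm can never be ``compelled to select $I^{\mathrm{W}}(t)=0$''; the most you can argue is that large $Z^{\mathrm{W}}(t)$ drives $I^{\mathrm{W}}(t)$ down toward the minimum interference compatible with the rate floors, and you would need an additional slackness assumption (that this minimum lies below $\bar{I}^{\mathrm{W}}$) for \eqref{ZWiGig} to contract. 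Second, the claim that a large $Z_{k}^{\mathrm{C}}(t)$ ``compels the algorithm to saturate $R_{k}(t)\tau(t)=Q_{k}(t)$'' treats the upper bound in \eqref{MinRate} as attainable, but the achievable rate is also capped by the channel, the power budget \eqref{Power}, and the interference constraint \eqref{WiGigInterferenceMitigation}; saturation need not be feasible, so $Q_{k}(t+1)=A_{k}(t)$ does not follow. Moreover, even granting saturation, \eqref{ZNR} gives $Z_{k}^{\mathrm{C}}(t+1)=[Z_{k}^{\mathrm{C}}(t)-\bar{Q}_{k}+A_{k}(t)]^{+}$, which contracts only if $A_{k}^{\max}\le\bar{Q}_{k}$---another unstated assumption. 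The paper's derivative-monotonicity argument sidesteps all of this by bounding $Z_{k}^{\mathrm{C}}(t)$ directly in terms of the subproblem's first-order conditions rather than in terms of which action the optimizer ``must'' choose, and I would recommend you restructure your proof along those lines (or explicitly state the slackness assumptions your threshold arguments need).
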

\begin{proof}
See Appendix \ref{Proof_QueueUpperBound}.
\end{proof}

\begin{theorem}\label{Theorem_UUpperBound}
The long-term average spectrum efficiency of mmWave NR-U network generated by optimizing $\mathcal{P}2$ is limited by a lower bound independent of the operation SP, expressed by
\begin{equation}\label{UUpperBound}\small
R^{\Pi}=\lim_{T\to+\infty}\frac{1}{T}\sum_{t=0}^{T-1}\mathbb{E}\left[R(t)\right]\geq R^{\mathrm{opt}}-\left(B_{1}+V_{0}\sum_{k\in\mathcal{K}}(Q_{k}^{\max})^2\right),
\end{equation}
\noindent where $Q_{k}^{\max}$ is the upper bound of $Q_{k}(t)$, and $R^{\mathrm{opt}}$ is the optimal long-term average mmWave spectrum efficiency under $\mathcal{P}1$.
\end{theorem}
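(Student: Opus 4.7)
The plan is to follow the standard Lyapunov drift-plus-penalty argument, using Lemma~\ref{Lemma_DriftPenalty} as the main inequality. First, I would invoke the existence of a $\mathbf{\Theta}(t)$-independent stationary randomized policy $\Pi^{*}$ that attains the optimal long-term NR-U spectral efficiency $R^{\mathrm{opt}}$ of $\mathcal{P}1$ while satisfying the time-averaged WiGig interference constraint \eqref{AvgWiGigInterference} and the NR-U delay constraint \eqref{Delay} in expectation. Such a policy can be constructed along the lines of \cite{Qiu2018Lyapunov} by Carath\'eodory/Lyapunov-type arguments on the capacity region of $\mathcal{P}1$.

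Next, I would use the observation that Algorithm~\ref{Alg_PDD} is designed to (approximately) minimize the right-hand side of the drift-plus-penalty bound \eqref{DriftPenaltyUpperBound} subject to the per-slot constraints \eqref{WiGigInterferenceMitigation_P3}--\eqref{ContinuousVar_P3}. Hence, denoting the per-slot value of the minimized bound under the proposed policy by $\Delta_{v}^{\Pi}(t)$ and the corresponding value under $\Pi^{*}$ by $\Delta_{v}^{\Pi^{*}}(t)$, one gets $\Delta_{v}^{\Pi}(t) \le \Delta_{v}^{\Pi^{*}}(t)$. Substituting $\Pi^{*}$ into \eqref{DriftPenaltyUpperBound} and taking iterated expectations, the terms multiplying $Z^{\mathrm{W}}(t)-\bar{I}^{\mathrm{W}}$ and $Q_{k}(t)+Z_{k}^{\mathrm{C}}(t)$ vanish (or are non-positive) because $\Pi^{*}$ meets the long-term constraints, leaving
\begin{equation*}
\Delta_{v}^{\Pi}(t) \le B_{1} + B_{2}(t) - R^{\mathrm{opt}}.
\end{equation*}

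Expanding $\Delta_{v}^{\Pi}(t)$ according to \eqref{DriftPenalty}, summing the resulting inequality over $t=0,1,\dots,T-1$, using the telescoping of the Lyapunov function together with $L(\mathbf{Z}(0))=0$ and $L(\mathbf{Z}(T))\ge 0$, and then dividing both sides by $T$, I would obtain
\begin{equation*}
\frac{1}{T}\sum_{t=0}^{T-1}\mathbb{E}[R(t)] \ge R^{\mathrm{opt}} - B_{1} - \frac{1}{T}\sum_{t=0}^{T-1}\mathbb{E}[B_{2}(t)].
\end{equation*}
The final step is to bound $B_{2}(t)=V_{0}\sum_{k}[Q_{k}^{2}(t)+Z_{k}^{\mathrm{C}}(t)(Q_{k}(t)-\bar{Q}_{k})]$ uniformly in $t$. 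Using Theorem~\ref{Theorem_QueueUpperBound}, both $Q_{k}(t)$ and the virtual queue $Z_{k}^{\mathrm{C}}(t)$ are bounded; together with $Q_{k}(t)\le Q_{k}^{\max}$, the second term is absorbed into the first up to a constant, yielding the bound $V_{0}\sum_{k\in\mathcal{K}}(Q_{k}^{\max})^{2}$ after passing to the limit $T\to\infty$.

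The main obstacle will be the two discrepancies between the ideal Lyapunov template and the present setting: (i) Algorithm~\ref{Alg_PDD} only returns a stationary point of the nonconvex $\mathcal{P}2$ rather than a global minimizer of the drift-plus-penalty upper bound, so the inequality $\Delta_{v}^{\Pi}(t)\le \Delta_{v}^{\Pi^{*}}(t)$ requires invoking the $C$-additive-approximation argument of PDD--CCCP to absorb the optimality gap into the constant; and (ii) the $Z_{k}^{\mathrm{C}}(t)(Q_{k}(t)-\bar{Q}_{k})$ component of $B_{2}(t)$ is not a priori sign-definite, so careful use of Theorem~\ref{Theorem_QueueUpperBound} is needed to show it is asymptotically dominated by $\sum_{k}(Q_{k}^{\max})^{2}$.
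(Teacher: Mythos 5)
Your proposal follows essentially the same route as the paper's proof in Appendix C: comparing the drift-plus-penalty bound of Lemma~\ref{Lemma_DriftPenalty} against a stationary $\omega$-only policy achieving $R^{\mathrm{opt}}$, telescoping the Lyapunov function over $t=0,\dots,T-1$, and bounding the residual $B_2$ contribution via $Q_k(t)\le Q_k^{\max}$. If anything, you are more explicit than the paper about the two weak points you flag at the end --- the algorithm only reaching a stationary point of the nonconvex per-slot problem, and the sign of the $Z_k^{\mathrm{C}}(t)(Q_k(t)-\bar{Q}_k)$ term --- both of which the paper's appendix passes over silently.
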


\begin{proof}
See Appendix \ref{Proof_UUperBound}.
\end{proof}

From Theorem \ref{Theorem_UUpperBound}, it is demonstrated that the NR-U spectral efficiency $R^{\Pi}$  can be arbitrarily close to the optimal $R^{\mathrm{opt}}$ by selecting asymptotically small $V_0$ and $V_1$.


\section{Simulation}
In this section, we present the numerical results of the designed algorithm, and compare it with existing schemes from other literatures.
We evaluate the performance of the proposed coexistence strategy with $I = 2$ NR APs and $J_{\mathrm{A}} = 3$ WiGig APs together deployed in a $10 \times 10 ~ m^2$ coverage, operating at the same $60$ GHz band. The bandwidth is set as $20$ MHz.
We assume $20$ SPs in each DTI, and each SP has an equal transmission duration $\tau = 25.6$ ms \cite{SP_setting}.
Each NR AP equips $M_{0} = 128$ ULA antennas and $N_{0} = 8$ RF chains.
Moreover, each WiGig AP $j$ has $M_{j}^{\mathrm{A}} = 64$ and $N_{j}^{\mathrm{A}} = 4$ RF chains, while each WiGig user $j$ equips $M_{j}^{\mathrm{U}}= 4$ antennas and a single RF chain.
The transmit power of all the WiGig APs is defined as $p^{\mathrm{W}} = 10$ dBm, and the maximum transmit power of NR-U users is $P^{\max} = 10$ dBm.
Each WiGig AP serves $32$ WiGig users, which are divided into different WiGig MIMO user groups that are scheduled in different SPs.
With the trained MIMO user groups and hybrid beamforming configuration, intra-RAT interference in WiGig network has been suppressed.
Moreover, the maximal inter-RAT interference of each WiGig receiver is set as $I_{\max}^{\mathrm{W}} = -60$ dBm.
On the other hand, the traffic arrival of each NR user follows Poisson distribution with density $5$ bit/s/Hz.
The minimal rate requirements of NR users are set as $R_{\min} = 0.1$ bit/s/Hz,
and the delay requirements uniformly span from $1$ ms to $10$ ms.
The channel model parameters of user $k$ are set according to \cite{Dai2019HybridPrecodingMIMONOMA} with one LoS link and $L=3$ NLoS links, and the AoAs/AoDs uniformly distributed within $[0, 2\pi]$.
Moreover, the path loss model is set according to \cite{Lagen2018LBR}, and the noise power spectral density is $-134$ dBm/Hz.
For the \textit{PDD-CCCP} algorithm, we set the initial penalty factor $\rho^{(0)} = 0.5$, control parameter $\delta^{(0)} = 1$, the maximum inner loop number $N_{\mathrm{BCU}}^{\max} = 10$, and the threshold $\epsilon_{1} = 10^{-3}$, $\epsilon_{2} = 10^{-3}$.
The proposed algorithms are evaluated through $10000$ Monte-Carlo realizations.

In this simulation, we consider four baseline schemes, namely, \textit{CHS-HBF-FP},  \textit{CHS-HBF-EP}, \textit{CHS-HBF-AO}, and  \textit{CHS-BCU-SCA}.
In the above baseline schemes, we decide the user grouping and analog beamforming strategies based on channel correlation \cite{Dai2019HybridPrecodingMIMONOMA}.
Specifically, user grouping is first performed by extending the cluster head selection (CHS) algorithm in \cite{Dai2019HybridPrecodingMIMONOMA} to multi-cell scenario.
In addition, we adopt the hybrid beamforming (HBF) scheme, where each NR-U AP designs analog beamforming to maximize the effective gain of cluster head user, and mitigate inter-beam interference through zero-forcing digital beamforming.
In \textit{CHS-HBF-FP} and \textit{CHS-HBF-EP} schemes, the power coefficients are respectively determined based on fixed power (FP) control and equal power (EP) control, where $p_{k}^{\mathrm{C,FP}} = \frac{nP^{\max}}{\sum_{k\in\mathcal{K}}x_{u,k}}$ and $p_{k}^{\mathrm{C,EP}} = \frac{P^{\max}}{\sum_{k\in\mathcal{K}}}$.
In \textit{CHS-HBF-AO} scheme, a dynamic power control algorithm based on \cite{MIMONOMA_AO} is considered. Under \textit{CHS-HBF-AO}, the power control algorithm is solved with the auxiliary variables based on alternating optimization (AO).
Furthermore, we compare the proposed algorithm with \textit{CHS-BCU-SCA}, which is a joint digital beamforming and power control scheme. The \textit{CHS-BCU-SCA} tackles the non-convex minimal SINR constraint based on successive convex approximation (SCA), and then jointly solves the digital beamforming and power allocation problem based on BCU \cite{MIMO_BCU_SCA}.

\begin{figure}[htbp]
\centering
    \subfloat[Spectral efficiency of NR-U networks.]{\label{fig_convergenceR}
        \includegraphics[width=0.45\textwidth]{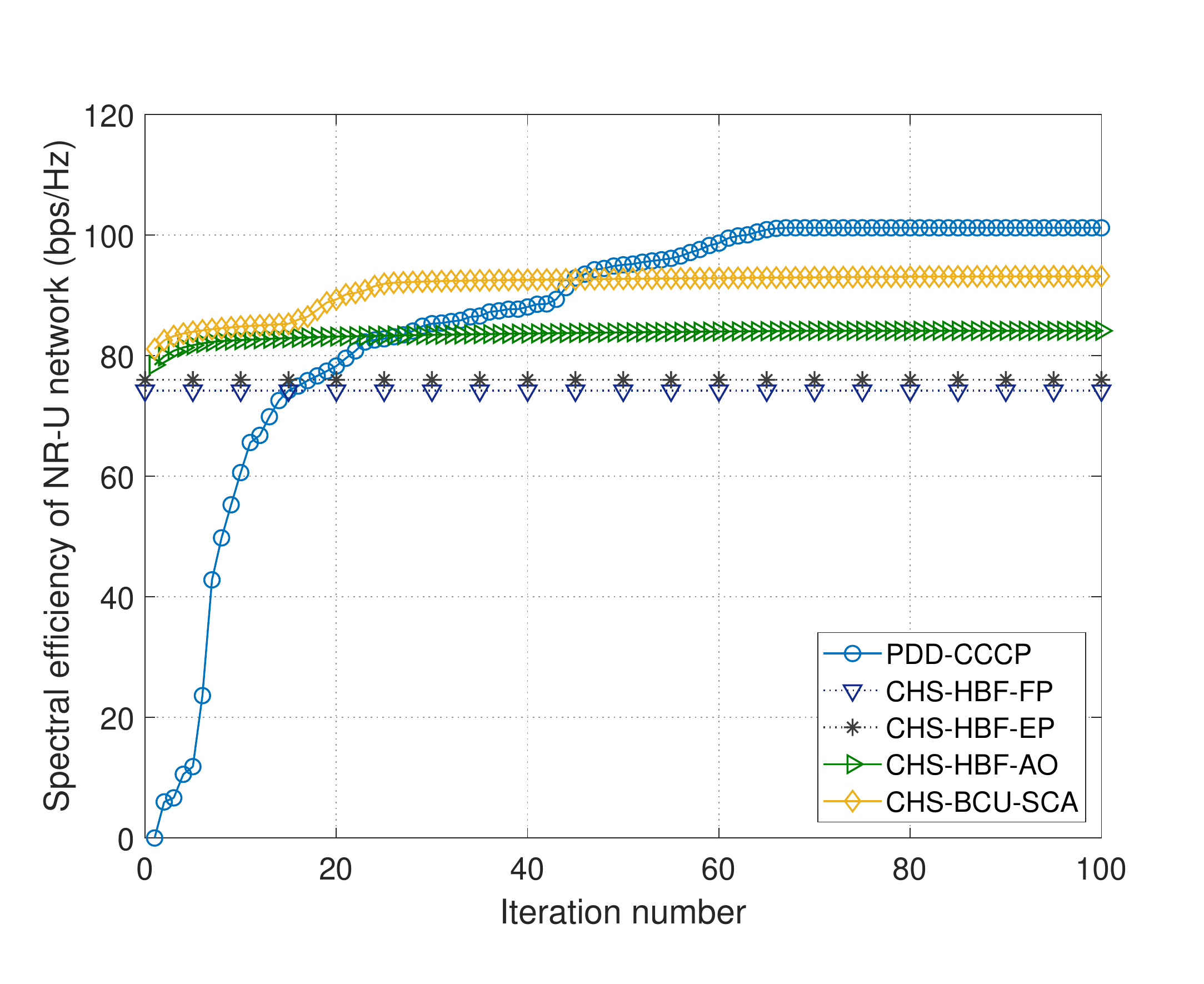}
    }
    \\
	\subfloat[Inter-RAT interference to WiGig.]{\label{fig_convergenceIW}
        \includegraphics[width=0.45\textwidth]{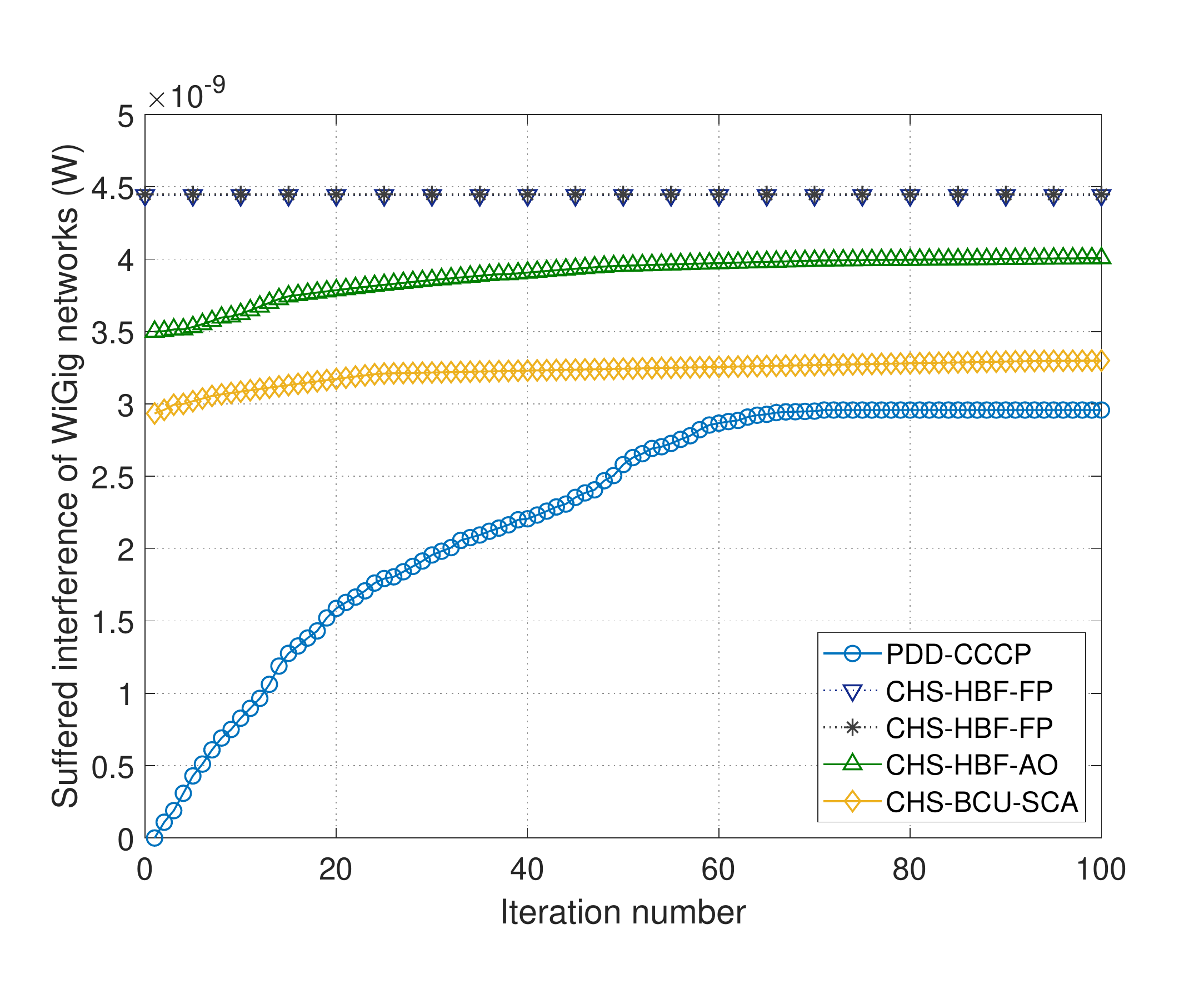}
    }
    \caption{Convergence process of the proposed \emph{PDD-CCCP} algorithm.}
    \label{fig_convergence}
\end{figure}

\begin{figure}
  \centering
  \includegraphics[width=0.45\textwidth]{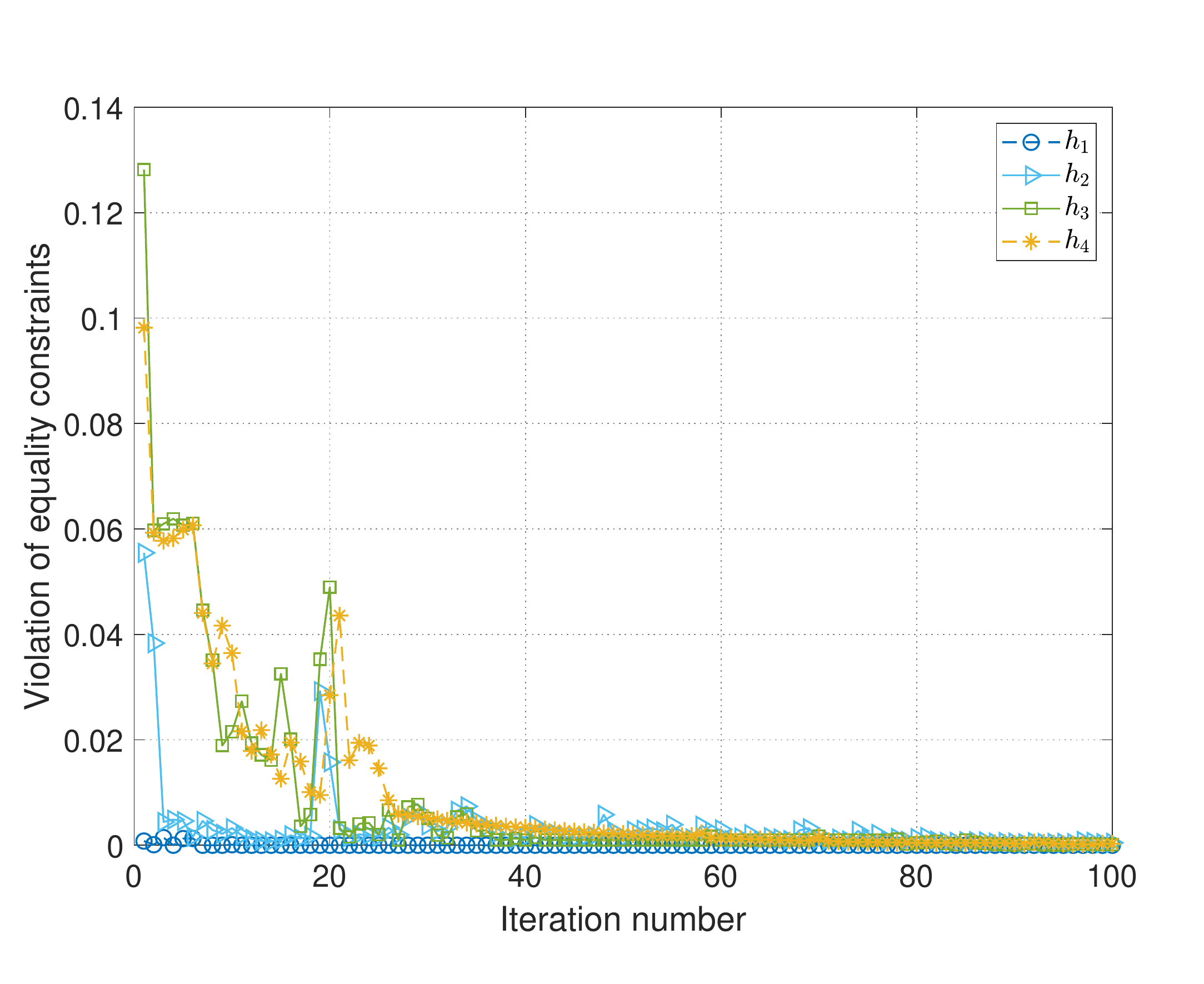} \\
  \caption{Validation of equivalent constraint transformations.}\label{fig_convergenceVio}
\end{figure}

The convergence performance of the proposed \emph{PDD-CCCP} algorithm is presented in Fig. \ref{fig_convergence}.
In Fig. \ref{fig_convergence}\subref{fig_convergenceR}, the achievable NR-U spectral efficiency converges within less than $100$ inner iterations.
In Fig. \ref{fig_convergence}\subref{fig_convergenceIW}, the aggregate inter-RAT interference suffered by WiGig users firstly increases with the increment of NR-U data rate, and then stabilizes at a certain level.
The proposed joint user grouping, hybrid beam coordination and power control algorithm outperforms both \textit{CHS-BCU-AO} and \textit{CHS-BCU-SCA} in terms of spectral efficiency and interference mitigation with a relatively slower convergence speed.

\begin{figure}[htbp]
\centering
\subfloat[NR-U spectral efficiency.]{\label{fig_R}
    \includegraphics[width=0.45\textwidth]{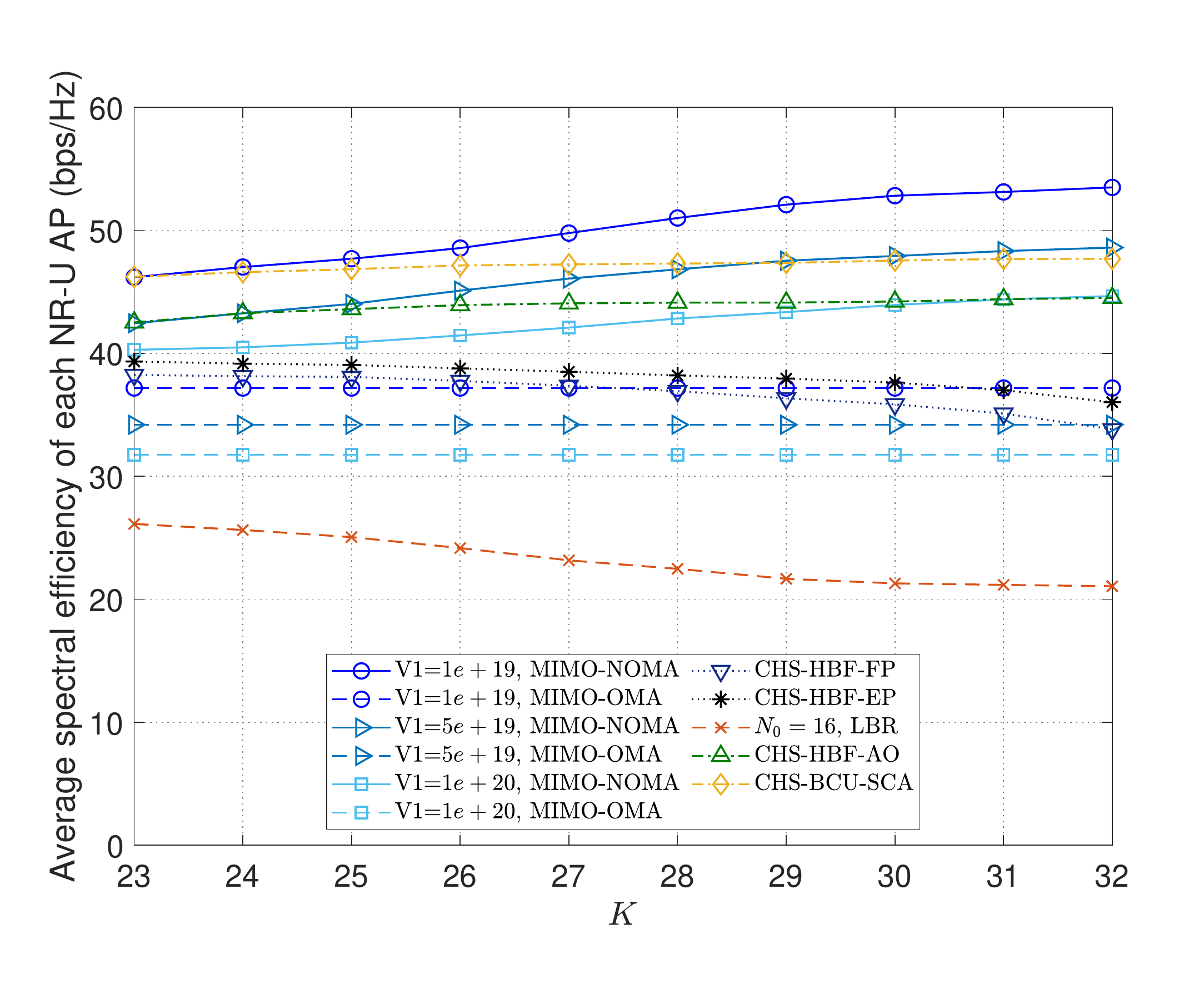}
}
\\
\subfloat[NR-U traffic delay.]{\label{fig_delay_K}
    \includegraphics[width=0.45\textwidth]{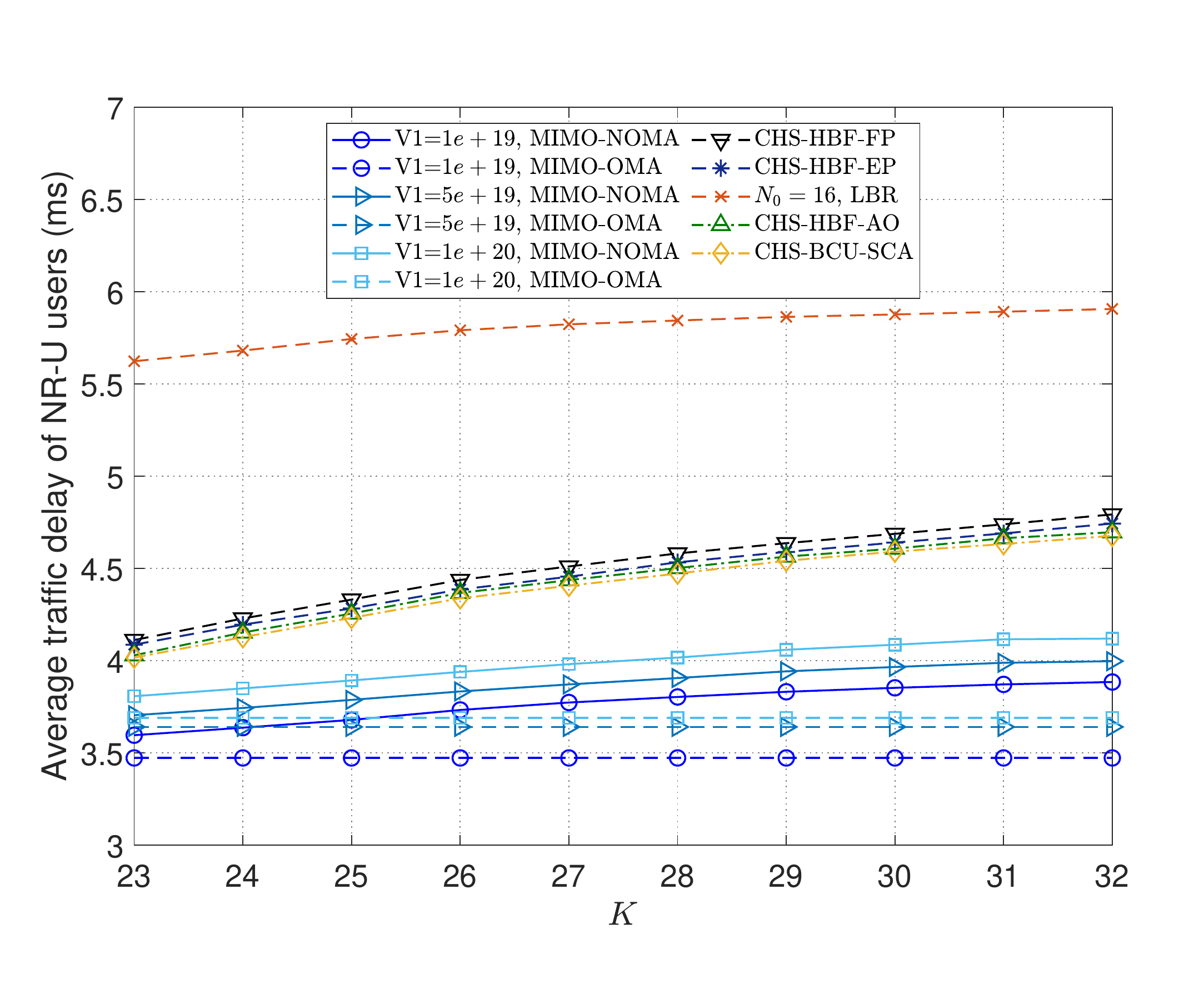}
}
\caption{NR-U performance comparison under various NR-U user number in different algorithms.}
\label{fig_NR-U}
\end{figure}

Fig. \ref{fig_convergenceVio} shows the validation of equivalent constraint transformations from $\mathcal{P}2$ to $\mathcal{P}3$.
Here, we denote constraint violations $h_{i}$, $\forall i \in \{1,2,3,4\}$ respectively represent the difference values of penalty terms from \eqref{AL1} to \eqref{AL4}.
From this figure, $h_{i}, \forall i \in \{1,2,3,4\}$ eventually descends to a low value approximating $0$, and thus the equality constraints in $\mathcal{P}3$ can be satisfied, namely $\mathcal{P}2$ and $\mathcal{P}3$ are equivalent.

\begin{figure}
  \centering
  \includegraphics[width=0.45\textwidth]{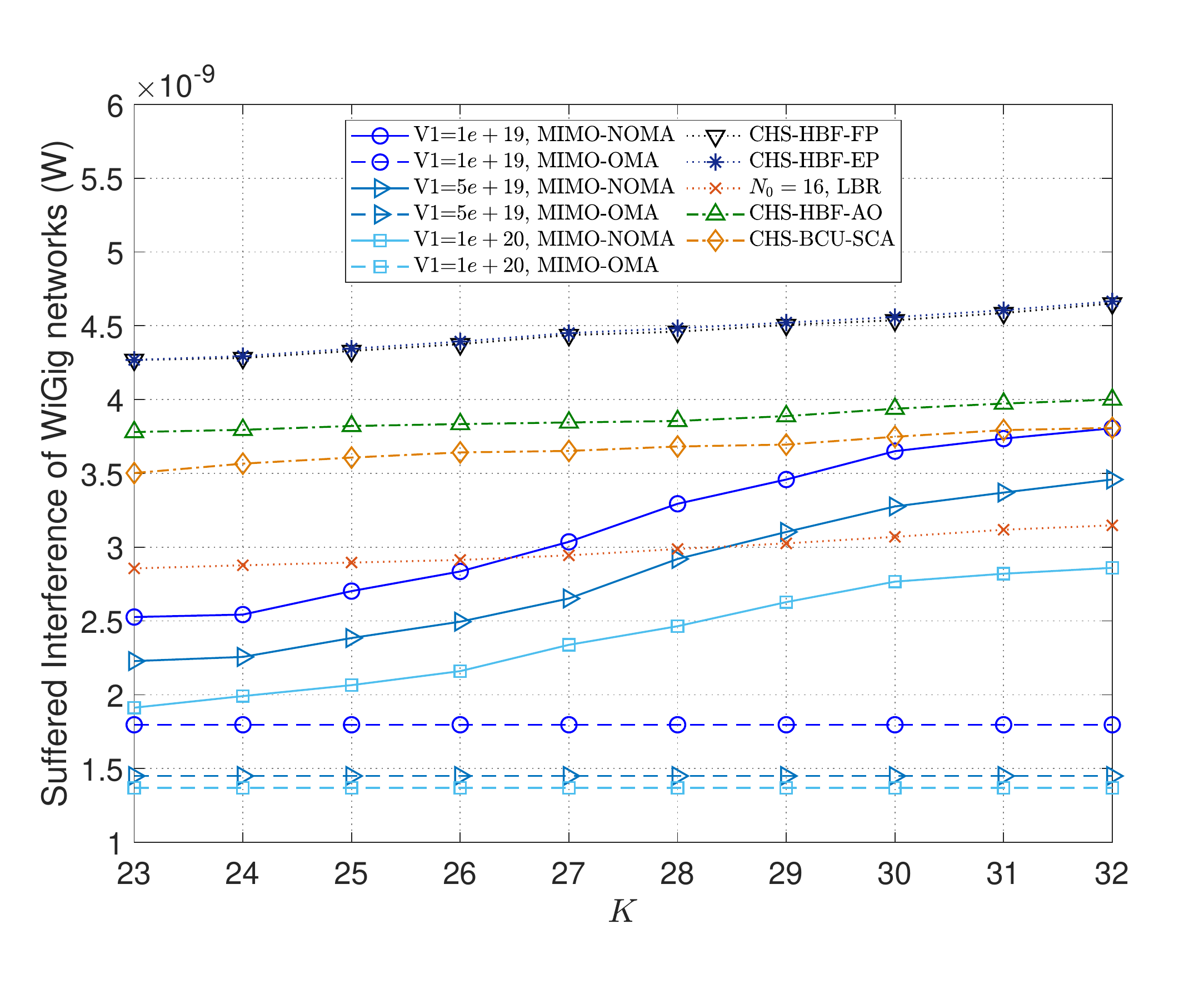} \\
  \caption{Inter-RAT interference to WiGig under various NR-U user number in different algorithms.}\label{fig_IW_K}
\end{figure}

\begin{figure}[htbp]
\centering
\subfloat[NR-U traffic delay.]{\label{fig_delay_V1}
    \includegraphics[width=0.45\textwidth]{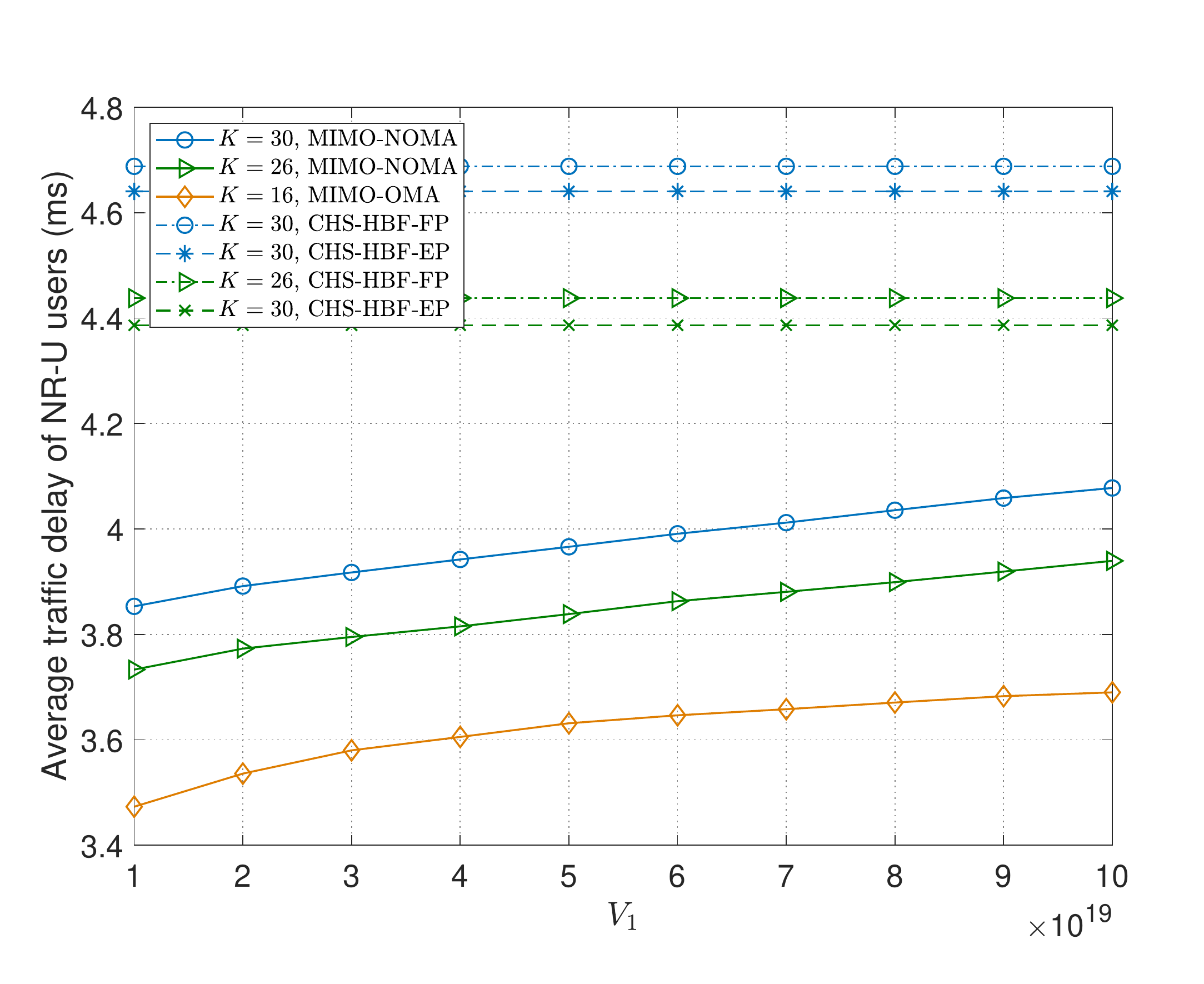}
}
\\
\subfloat[Inter-RAT interference to WiGig.]{\label{fig_delay_V1}
    \includegraphics[width=0.45\textwidth]{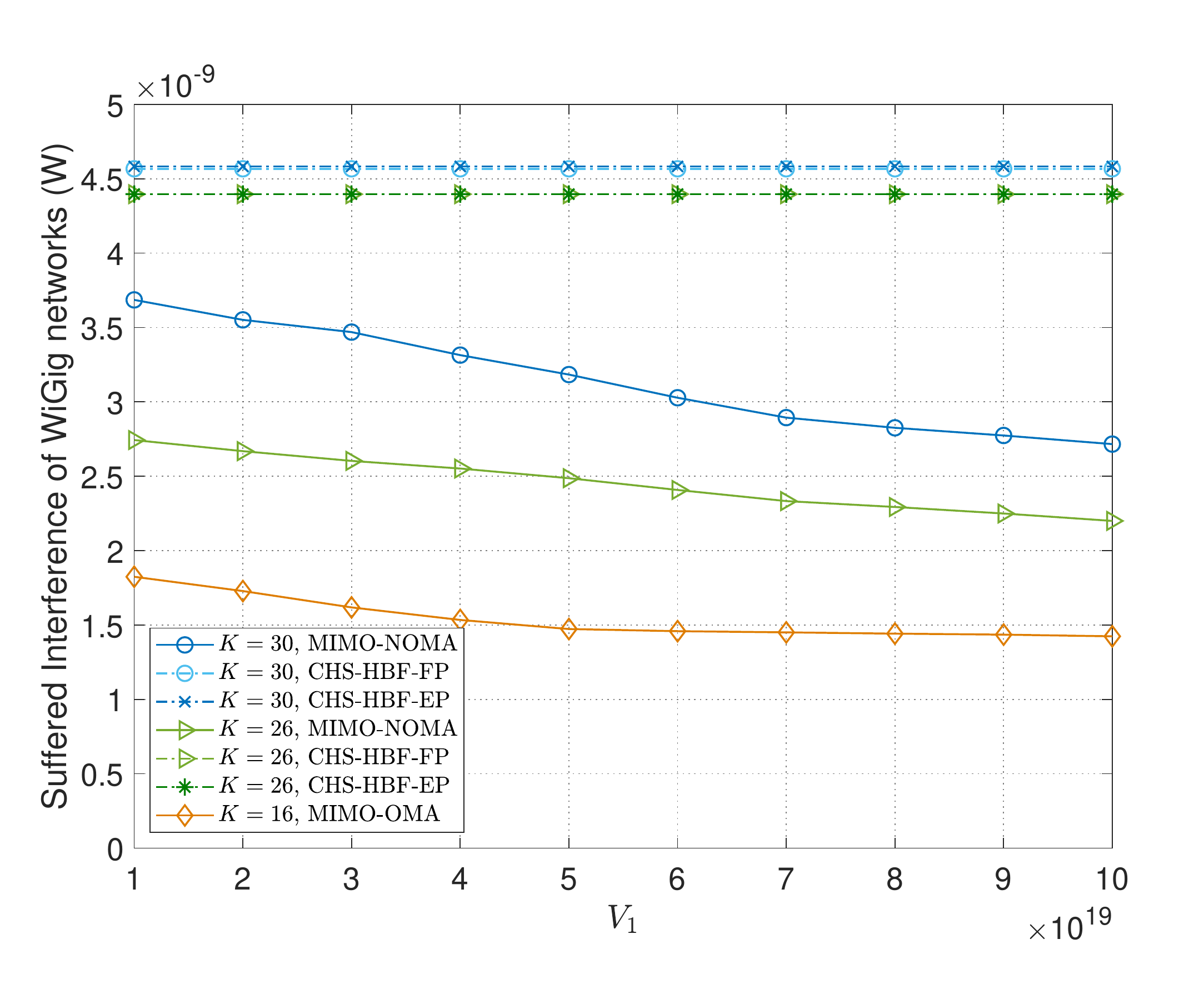}
}
\caption{Influence of Lyapunov control parameter $V_{1}$ under the different algorithms.}
\label{fig_V}
\end{figure}

In Fig. \ref{fig_NR-U}, we present the NR-U performance comparison under different algorithms. Here, performance of both the proposed schemes, the baseline schemes, and the exiting beam management LBR scheme are compared.
For the LBR scheme, we assume each NR-U AP equips $N_{0} = 16$ RF chains to support densely connected users, and the energy detection (ED) threshold is set as $-62$ dBm.
Moreover, we set the Lyapunov control parameter $V_{0} = 3\times10^{5}$.
From this figure, we can conclude that the proposed algorithm leads to the highest spectral efficiency and the lowest NR-U traffic delay under relatively small $V_1$ due to the joint user grouping, coordinated hybrid beamforming and power control design.
Meanwhile, the performance gaps increase with the number of NR-U users $K$.
Moreover, the LBR mechanism has the highest traffic delay and the lowest spectral efficiency since it prevents some of the users to access the wireless channels and it operates in an uncoordinated manner.
Furthermore, the spectral efficiency of NR-U under the \textit{CHS-HBF-FP} and \textit{CHS-HBF-EP} schemes in Fig. \ref{fig_R} decreases with the increment of NR-U users due to the enhanced inter-user interference.
Accordingly, Fig. \ref{fig_IW_K} shows the inter-RAT interference to WiGig under various number of NR-U users, which also shows the performance advantages of the proposed scheme.

In Fig. \ref{fig_V}, we illustrate the influence of Lyapunov control parameters $V_{1}$ under the proposed algorithms and parts of the baseline algorithms.
Here, we set the long-term average interference threshold $\bar{I}_{\max}^{\mathrm{W}} = -54$ dBm, and $V_{1}$ increases from $10^{19}$ to $10^{20}$.
From this figure, the interference suffered from WiGig users decreases with $V_{1}$ at the cost of higher NR-U traffic delay. This is because the average interference asymptotically reduces to the predefined threshold.
In addition, it also shows that the tradeoff between NR-U traffic delay and WiGig interference can be dynamically and adaptively controlled.

\section{Conclusion}
In this work, we introduce a novel uplink mmWave NR-U and WiGig coexistence framework, where multiple NR APs leverage MIMO-NOMA to coordinate with incumbent WiGig transmissions.
The investigated framework is capable to support intensively connected users, mitigating both intra-RAT and inter-RAT interference to enhance spectral utilization and achieve harmonious coexistence.
Through a joint user grouping, beam coordination and power control scheme, the spectral efficiency of the connected NR-U users are maximized while ensuring stringent NR-U delay requirement and the performance of WiGig users. The objective function is formulated as a Lyapunov optimization based MINLP problem with nonconvexity, discontinuity, and unit-modulus constraints. To deal with the proposed NP-hard problem, we propose a dual-loop \emph{PDD-CCCP} algorithm.  It first transfers the objective function into an AL problem, and then approximately decompose the AL problem into multiple sequential convex subproblems through CCCP and BCU methods.
Numerical results are presented to verify the effectiveness of the proposed algorithms.

\appendices
\section{Proof of the Lemma \ref{Lemma_DriftPenalty}}
Considering the definitions of $Q_{k}(t+1)$ and $Z_{k}^{\mathrm{C}}(t+1)$ and the fact $([x]^+)^2 \leq x^2$, we have
\begin{equation}\label{SquareZBound}\small{
\begin{split}
& \left[Z_{k}^{\mathrm{C}}(t+1)\right]^2
\overset{(a)}{\leq}  (Z_{k}^{\mathrm{C}}(t))^2 + (\bar{Q}_{k}) ^2  + (A_{k}^{\max})^2 + 2 Q_{k}^2(t)
\\& + 2 Z_{k}^{\mathrm{C}}(t)[Q_{k}(t) - \bar{Q}_{k} ]
- 2 [Q_{k}(t) + Z_{k}^{\mathrm{C}}(t)]R_{k}(t)\tau(t),
\end{split}}
\end{equation}
where $(a)$ is due to $ Q_{k}(t) - R_{k}(t) \tau(t) \geq 0$.

Substituting \eqref{SquareZBound} into definition of Lyapunov drift in \eqref{LyapunovDrift}, $\Delta Z_{k}^{\mathrm{C}}(t)$ is bounded by
\begin{equation}\label{DriftBound1}\small{
\begin{split}
& \Delta Z_{k}^{\mathrm{C}}(t)  = \frac{1}{2}[(Z_{k}^{\mathrm{C}})^2(t+1) - (Z_{k}^{\mathrm{C}}(t))^2] \\
& \leq \frac{1}{2}[(\bar{Q}_{k})^{2} \!+\! (\bar{A}_{k}^{\max})^{2}]+Q_{k}^{2}(t) - [Z_{k}^{\mathrm{C}}(t)+Q_{k}(t)]R_{k}(t)\tau(t) \\
& + Z_{k}^{\mathrm{C}}(t)[Q_{k}(t) - \bar{Q}_{k} ], \\
\end{split}}
\end{equation}
where the inequality can be obtained by combining constraint \eqref{MinRate}.

Similarly, we can obtain the upper bound of $\Delta Z^{\mathrm{W}}(t)$ as
\begin{equation}\label{DriftBound2}\small{
\begin{split}
  \Delta Z^{\mathrm{W}}(t) & \leq \frac{(\bar{I}^{\mathrm{W}}) ^2  + (I^{\mathrm{W}}(t))^2} {2} + I^{\mathrm{W}}(t) [  Z^{\mathrm{W}}(t) - \bar{I}^{\mathrm{W}}]  \\
    & \leq \frac{(\bar{I}^{\mathrm{W}}) ^2  + (|\mathcal{J}_{\mathrm{R}}| \gamma_{\mathrm{W}})^2} {2} + I^{\mathrm{W}}(t) [ Z^{\mathrm{W}}(t) - \bar{I}^{\mathrm{W}}] .
\end{split}}
\end{equation}

\section{Proof of Theorem \ref{Theorem_QueueUpperBound}}\label{Proof_QueueUpperBound}
Because $Z_{k}^{\mathrm{C}}(t)$ is updated according to data rate, we consider the solution of subproblem \eqref{PU} with respect to SINR $\boldsymbol{\gamma}$ to demonstrate $Z_{k}^{\mathrm{C}}(t)$ generated by algorithm \ref{Alg_PDD} is upper bounded.
Consider the Lagrangian function $\mathcal{\widehat{L}}_{\gamma}(\boldsymbol{\gamma})$ corresponding to \eqref{PU} with respect to $\boldsymbol{\gamma}$.
Since $\mathcal{\widehat{L}}_{\gamma}(\boldsymbol{\gamma})$ is concave over $\boldsymbol{\gamma}$, the first-order derivative should decrease with $\boldsymbol{\gamma}$.
Define $f^1(\boldsymbol{\gamma}) = \log_2(1+\boldsymbol{\gamma})-\mathcal{L}_4(\boldsymbol{\gamma}) + \sum_{k\in\mathcal{K}}\left(\boldsymbol{\kappa}_k^{\gamma,(1)}\right)^{T}\left(\log_2(1+\boldsymbol{\gamma}_{k})-\mathbf{x}_{k}R^{\min}\right) - \left(\boldsymbol{\kappa}_k^{\gamma,(2)}\right)^{T}\left(\boldsymbol{\gamma_{k}}-\gamma_{k}^{\min}\right)$,
and $f^{\mathrm{r}}(\gamma_{u,k}) = V_{0}\log_2(1+\gamma_{u,k})\tau$.
Denote the optimal SINR obtained by \eqref{PU} as $\boldsymbol{\gamma}^{*}$.
From constraint \eqref{MinRate_P3}, the optimal solution of SINR should satisfy $\boldsymbol{\gamma} \geq 0$.
Therefore, when $\frac{\partial \mathcal{\widehat{L}}_{\gamma}(\boldsymbol{\gamma})}{\partial \boldsymbol{\gamma}}|_{\boldsymbol{\gamma}=0} \geq 0$,
we have $\frac{\partial \mathcal{\widehat{L}}_{\gamma}(\boldsymbol{\gamma})}{\partial \boldsymbol{\gamma}}|_{\boldsymbol{\gamma}=\boldsymbol{\gamma}^*} \leq \frac{\partial \mathcal{\widehat{L}}_{\gamma}(\boldsymbol{\gamma})}{\partial \boldsymbol{\gamma}}|_{\boldsymbol{\gamma}=0}$, which implies
\begin{equation}\small{
\begin{split}
& \frac{\partial f^{1}(\gamma_{u,k})}{\partial \gamma_{u,k}} \bigg| _{\gamma_{u,k}=0}
+ (Q_{k}+Z_{k}^{\mathrm{C}}) \sum_{u \in \mathcal{U}} \frac{\partial f^{\mathrm{r}}(\gamma_{u,k})}{\partial \gamma_{u,k}} \bigg| _{\gamma_{u,k}=0}
\geq \\&
\frac{\partial f^{1}(\gamma_{u,k})}{\partial \gamma_{u,k}} \bigg| _{\gamma_{u,k} = \gamma_{u,k}^*}
+ (Q_{k}+Z_{k}^{\mathrm{C}}) \sum_{u \in \mathcal{U}} \frac{\partial f^{\mathrm{r}}(\gamma_{u,k})}{\partial \gamma_{u,k}} \bigg| _{\gamma_{u,k} = \gamma_{u,k}^*}\!,
 \forall u, k.
\end{split}}
\end{equation}

From the above inequality, the upper bound of $Z_{k}^{\mathrm{C}}$ can be expressed by
\begin{equation}\label{Z1Bound}\small
Z_{k}^{\mathrm{C}} (t) \leq \Omega_{k} - Q_{k} (t),
\end{equation}
where $\Omega_{k} = \frac
{\frac{\partial f^{1}(\gamma_{u,k})}{\partial \gamma_{u,k}} \bigg| _{\gamma_{u,k} = \gamma_{u,k}^*}
- \frac{\partial f^{1}(\gamma_{u,k})}{\partial \gamma_{u,k}} \bigg| _{\gamma_{u,k} = 0}}
{\sum_{n = 1}^{N_{0}} \bigg(
    \frac{\partial f_{u,k}^{\mathrm{r}}(\gamma_{u,k})}{\partial \gamma_{u,k}} \bigg| _{\gamma_{u,k} = \gamma_{u,k}^*} - \frac{\partial f_{u,k}^{\mathrm{r}}(\gamma_{u,k})}{\partial \gamma_{u,k}} \bigg| _{\gamma_{u,k} = 0} \bigg)}$.
Otherwise, when $\frac{\partial \mathcal{\widehat{L}}_{\gamma}(\boldsymbol{\gamma})}{\partial \boldsymbol{\gamma}}|_{\boldsymbol{\gamma}=0} \leq 0$, we can obtain $\boldsymbol{\gamma}^* = 0$.

By adding $[Q_{k}(t+1)-\bar{Q}_{k}]^+$ on both sides of \eqref{Z1Bound}, we have $Z_{k}^{\mathrm{C}} (t+1) \leq Z_{k}^{\mathrm{C}} (t) + [Q_{k}(t+1)-\bar{Q}_{k}]^+ \leq \Omega_{k} - Q_{k}(t) + [Q_{k}(t+1)-\bar{Q}_{k} (t)]^+$, $\forall k$.
From the above analyses, $Z_{k}^{\mathrm{C}}$ is upper bounded.
Similarly, considering the power control subproblem in \eqref{PP}, we can prove $Z^{\mathrm{W}}(t)$ is upper bounded.
Hence, both WiGig interference constraints and NR delay constraints can be guaranteed.

\section{Proof of Theorem \ref{Theorem_UUpperBound}}\label{Proof_UUperBound}
By substituting $B_2 = V_{0}\sum_{k\in\mathcal{K}} [ Q_{k}^2(t) + Z_{k}^{\mathrm{C}}(t)(Q_{k}(t) - \bar{Q}_{k})]$ into expression \eqref{DriftPenaltyUpperBound}, it can be rewritten as
\begin{equation}\label{DriftPenaltyUpperBound2}\small{
\begin{split}
&\Delta_v (t)
\leq B_{1}
    - V_{0}\sum_{k \in \mathcal{K} }^{K} Z_{k}^{\mathrm{C}}(t) \mathbb{E}[\bar{Q}_{k}-Q_{k}(t)|\mathbf{\Theta}(t)]
  \\&  -  V_{1}I^{\mathrm{W}}(t) \mathbb{E}[\bar{I}^{\mathrm{W}}-I^{\mathrm{W}}(t)|\mathbf{\Theta}(t)]
   + V_{0}\sum_{k \in \mathcal{K}} \bigg[ Q_{k}^{2}(t) -
   \\& \left(Q_{k}(t)+Z_{k}^{\mathrm{C}}(t)\right)R_{k}(t)\tau(t) \bigg]
  - \mathbb{E} [R(t)|\mathbf{\Theta}(t)].
\end{split}}
\end{equation}
Since $(\bar{I}^{\mathrm{W}}-Z^{\mathrm{W}}(t))$ and $(\bar{Q}_{k}- Q_{k}(t))$ are two factors of
inequality constraints \eqref{ZWiGig} and \eqref{ZNR}, once these constraints are satisfied and the distribution of channel is i.i.d.,
\eqref{DriftPenaltyUpperBound2} can be simplified by applying $\omega$-only policy as follows \cite{Neely2010Stochastic}:
\begin{equation}\small{
\begin{split}
\Delta_v (t)
& \!\leq\! B_{1} \!+\! V_{0}\!\sum_{k \in \mathcal{K}} \! \!\left[ Q_{k}^{2}(t) \!-\! \!\left(Q_{k}(t)\!+\!Z_{k}^{\mathrm{C}}(t)\right)\!R_{k}(t)\tau(t) \right]\!
 \!-\!R^{\mathrm{opt}} \\
& \!\leq\! B_{1} \!+\! V_{0}\sum_{k\in\mathcal{K}}\left(Q_{k}^{\max}\right)^2 \!-\! R^{\mathrm{opt}} .
\end{split}}
\end{equation}
Summing from $0$ to $T-1$, and let $T \to \infty$, the above equation can be rearranged by
\begin{equation}\small{
\begin{split}
&\lim_{T\to+\infty}\frac{1}{T}\sum_{t=0}^{T-1}\mathbb{E}\left[R(t)\right]
\geq  R^{\mathrm{opt}} \!- B_{1} -\! V_{0}\sum_{k\in\mathcal{K}}(Q_{k}^{\max})^2
\\& + \! \lim_{\!T\!\to+\infty}\! \frac{1}{T} \mathbb{E}\!\left[\!V_0\sum_{k}L(Z_{k}^{\mathrm{C}}(t)) \!+\! V_1L(Z^{\mathrm{W}}(t))\right]\! \\
&\geq R^{\mathrm{opt}} -\! B_{1} -\! V_{0}\sum_{k\in\mathcal{K}}(Q_{k}^{\max})^2,
\end{split}}
\end{equation}
which ends the proof.


\ifCLASSOPTIONcaptionsoff
  \newpage
\fi

\end{document}